\documentclass[11pt]{article}

\pdfoutput=1

\usepackage{graphicx}
\usepackage{amssymb, amsthm, amsmath}
\usepackage{subfigure}
\usepackage{fullpage}

\newtheorem{theorem}{Theorem}

\newtheorem{definition}[theorem]{Definition}
\newtheorem{lemma}[theorem]{Lemma}

\newcommand{\paren}[1]{\left({#1}\right)}
\newcommand{\braces}[1]{\left\{{#1}\right\}}
\newcommand{\bracks}[1]{\left[{#1}\right]}

\newcommand{\setnot}[2]{\braces{#1 \mbox{ }: \mbox{ } #2}}

\renewcommand{\phi}{\varphi}

\DeclareMathOperator \pr {\mathbb{P}r}
\DeclareMathOperator \expect {\mathbb{E}}
\DeclareMathOperator \ranking {Ranking}
\DeclareMathOperator \rankingsim {RankingSimulate}

%%%%%%%%%%%%%%%%%%%%%%%%%%%%%%%%%%%%%%%%%%%%%%%%%%%%%%%%%%%%%

\title{\bf Thinking Twice about Second-Price Ad Auctions}
\author{
  Yossi Azar\footnote{
    Microsoft Research and Tel Aviv University.  \texttt{azar@tau.ac.il}
  } \and
  Benjamin Birnbaum\footnote{
    University of Washington, Department of Computer Science and Engineering.
    Research supported by an NSF Graduate Research Fellowship.
    \texttt{birnbaum@cs.washington.edu}
  } \and
  Anna R. Karlin\footnote{
    University of Washington, Department of Computer Science and Engineering.
    Research supported by NSF Grant CCF-0635147 and a grant from Yahoo! Research.
    \texttt{karlin@cs.washington.edu}
  } \and
  C.~Thach Nguyen\footnote{
    University of Washington, Department of Computer Science and Engineering.
    Research supported in part by NSF Grant CCF-0635147 and a grant from Yahoo! Research.
    \texttt{ncthach@cs.washington.edu}
  }
}
\date{ }

\begin{document}

\maketitle

\begin{abstract}
  A number of recent papers have addressed the algorithmic problem of
  allocating advertisement space for keywords in sponsored search
  auctions so as to maximize revenue, most of which assume that
  pricing is done via a first-price auction.  This does not
  realistically model the Generalized Second Price (GSP) auction used
  in practice, in which bidders pay the next-highest bid for keywords
  that they are allocated.  Towards the goal of more realistically
  modelling these auctions, we introduce the {\em Second-Price Ad
    Auctions} problem, in which bidders' payments are determined by
  the GSP mechanism.

  We show that the complexity of the Second-Price Ad Auctions problem
  is quite different than that of the more studied First-Price Ad
  Auctions problem.  First, unlike the first-price variant, for which
  small constant-factor approximations are known, we show that it is
  NP-hard to approximate the Second-Price Ad Auctions problem to any
  non-trivial factor, even when the bids are small compared to the
  budgets.  Second, we show that this discrepancy extends even to the
  $0$-$1$ special case that we call the {\em Second-Price Matching}
  problem (2PM).  In particular, offline 2PM is APX-hard, and for
  online 2PM there is no deterministic algorithm achieving a
  non-trivial competitive ratio and no randomized algorithm achieving
  a competitive ratio better than $2$.  This stands in contrast to
  the results for the analogous special case in the first-price model,
  the standard bipartite matching problem, which is solvable in
  polynomial time and which has deterministic and randomized online
  algorithms achieving better competitive ratios.  On the positive
  side, we provide a 2-approximation for offline 2PM and a
  5.083-competitive randomized algorithm for online 2PM.  The latter
  result makes use of a new generalization that we prove of a classic
  result on the performance of the ``Ranking'' algorithm for online
  bipartite matching.
\end{abstract}

~\\\\\\\\\\\\\\\\\\

\thispagestyle{empty}
\pagebreak

\setcounter{page}{1}

\newcommand \ourproblem {Second-Price Ad Auctions}
\newcommand \SecondPM {Second-Price Matching}
\section{Introduction}

The rising economic importance of online sponsored search advertising
has led to a great deal of research focused on developing its
theoretical underpinnings.  (See, e.g.,~\cite{Lahaie07} for a survey).
Since search engines such as Google, Yahoo!~and MSN depend on
sponsored search for a significant fraction of their revenue, a key
problem is how to optimally allocate ads to keywords (user searches)
so as to maximize search engine
revenue~\cite{Abrams07,Andelman04,Azar08,Buchbinder07,Chakrabarty08,Goel08a,Goel08b,Mahdian07,Mehta07,Srinivasan08}.
Most of the research on the dynamic version of this problem assumes
that once the participants in each keyword auction are determined, the
pricing is done via a first-price auction; in other words, bidders pay
what they bid. This does not realistically model the standard
mechanism used by search engines, called the Generalized Second Price
mechanism (GSP) \cite{Edelman07,Varian07}.

In an attempt to model reality more closely, we study the {\em
  \ourproblem} problem, which is the analogue of the above allocation
problem when bidders' payments are determined by the GSP mechanism
and there is only one slot for each keyword.  The GSP mechanism for a
given keyword auction reduces to a second-price auction when there is
one slot per keyword -- given the participants in the auction, it
allocates the advertisement slot to the highest bidder, charging that
bidder the bid of the second-highest bidder.

In the \ourproblem\ problem, we assume that there is a set of keywords
$U$ and a set of bidders $V$, where each bidder $v \in V$ has a known
daily budget $B_v$ and a non-negative bid $b_{u,v}$ for every keyword
$u \in U$.  The keywords are ordered by their arrival time, and as
each keyword arrives, the algorithm (i.e., the search engine) chooses
the bidders to participate in that particular auction and runs a
second-price auction with respect to those participants. Thus, instead
of selecting one bidder for each keyword, two bidders need to be
selected by the algorithm. Of these two bidders, the bidder with the
higher bid (where bids are always reduced to the minimum of the actual
bid and the bidders' remaining budget) is allocated that keyword's
advertisement slot at the price of the other bid.

This process results in an allocation and pricing of the advertisement
slots associated with each of the keywords. The goal is to select the
bidders participating in each auction to maximize the total profit
extracted by the algorithm. For an example instance of this problem,
see Figure~\ref{fig:2paa_example}.

We note that for simplicity of presentation we have chosen to present 
the model, the algorithms and the results as if the bidders are competing 
for a single slot. 
Obviously, our hardness results hold for the multi-slot problem as well. 

\begin{figure}[!b]
  \begin{center}
    \begin{tabular}{cccc}
      \subfigure[]{\scalebox{0.7}{\label{fig:2paa_example:a}
          \includegraphics{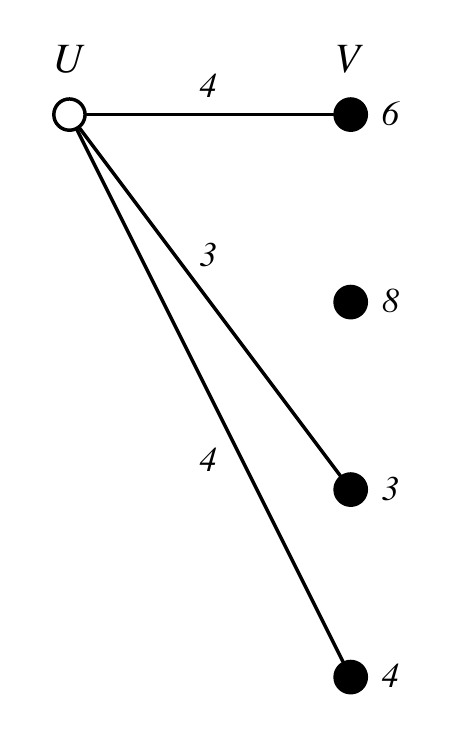}
        } 
      } &
      \subfigure[]{\scalebox{0.7}{\label{fig:2paa_example:b}
          \includegraphics{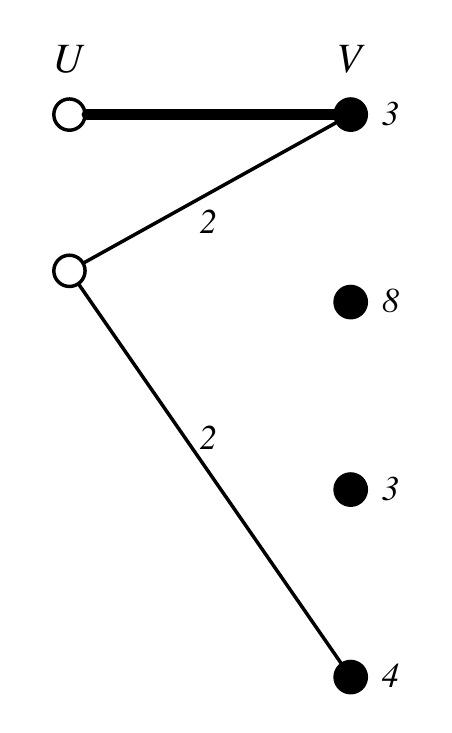}
        }
      } &
      \subfigure[]{\scalebox{0.7}{\label{fig:2paa_example:c}
          \includegraphics{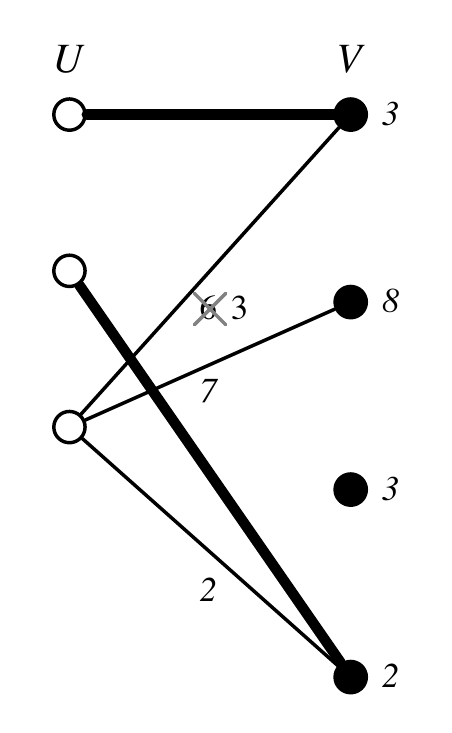}
        } 
      } &
      \subfigure[]{\scalebox{0.7}{\label{fig:2paa_example:d}
          \includegraphics{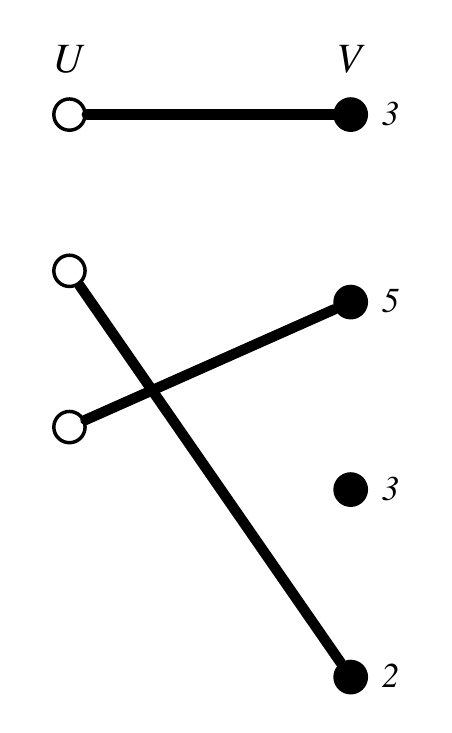}
        } 
      } 
    \end{tabular}
  \end{center}
  \caption{\small An example of the \ourproblem\ problem: The nodes in $U$
    are keywords and the nodes in $V$ are bidders.  The number
    immediately to the right of each bidder represents its remaining
    budget, and the number next to each edge connecting a bidder to a
    keyword represents the bid of that bidder for that keyword.
    Figure~\ref{fig:2paa_example:a} shows the situation when the first
    keyword arrives.  For this keyword, the search engine selects the
    first bidder, whose bid is 4, and the third bidder, whose bid is 3.
    The keyword is allocated to the first bidder at a price of 3,
    thereby reducing that bidder's budget by 3.
    Figure~\ref{fig:2paa_example:b} shows the situation when the
    second keyword arrives. The first and fourth bidders are selected,
    and the keyword is allocated to the fourth bidder at a price of 2,
    thereby reducing its remaining budget to 2. As each keyword
    arrives, the bid of a bidder for that keyword is adjusted to the
    minimum of its original bid and its remaining budget. Thus, for
    example, when the third keyword arrives, as shown in
    Figure~\ref{fig:2paa_example:c}, the bid of the first bidder for
    that keyword is adjusted from its original value of 6 down to 3
    since that is its remaining budget. The two bidders then selected
    are the first and the second, and the keyword is allocated to the
    second bidder at a price of 3.  } \label{fig:2paa_example}
\end{figure}

\subsection{Our Results}

We begin by considering the {\em offline} version of the
\ourproblem\ problem, in which the algorithm knows all of the original
bids of the bidders (Section~\ref{sec:flexible}).  Our main result
here is that it is NP-hard to approximate the optimal solution to this
problem to within a factor better than $m/R_{min}$, where $m$ is the
number of keywords and $R_{min} \geq 1$ is a constant independent of
$m$ such that no bidder bids more than $1/R_{min}$ of its initial
budget on any keyword.  Thus, {\em even when bids are small compared
  to budgets}, it is not possible in the worst case to get a good
approximation to the optimal revenue. (We show that it is trivial to
get a matching approximation algorithm.)  This result stands in sharp
contrast to the standard First-Price Ad Auctions problem, for which
there is a 4/3-approximation to the offline
problem~\cite{Chakrabarty08,Srinivasan08} (even for $R_{min} = 1$),
and an $e/(e-1)$-competitive algorithm to the online problem when bids
are small compared to budgets~\cite{Buchbinder07,Mehta07} (i.e., as
$R_{min} \rightarrow \infty$).

We then turn our attention to a theoretically appealing special case
that we call {\em \SecondPM}.  In this version of the problem, all
bids are either 0 or 1 and all budgets are 1. As before, the keywords
are ordered by arrival time, and a keyword $u$ can be matched to a
bidder $v$ with a profit of 1 only if $b_{u,v}=1$, there is a distinct
bidder $v'$ with $b_{u,v'}=1$, and neither $v$ nor $v'$ has been
matched to a keyword that arrived before $u$.  For an example instance
of this problem, see Figure~\ref{fig:2pm_example}.

\begin{figure}
  \begin{center}
    \begin{tabular}{cccc}
      \subfigure[]{\scalebox{0.7}{\label{fig:2pm_example:a}
          \includegraphics{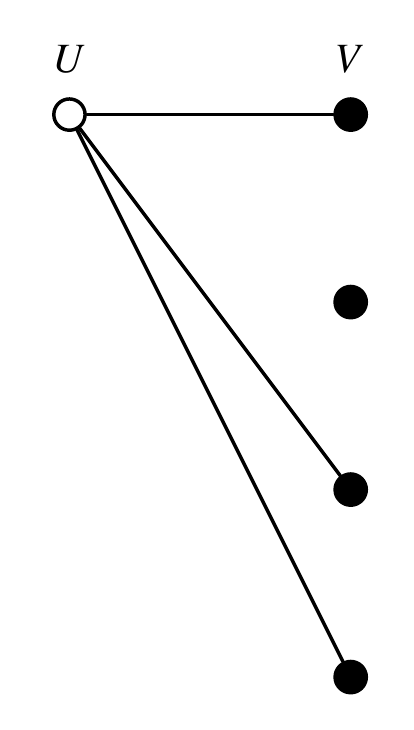}
        } 
      } &
      \subfigure[]{\scalebox{0.7}{\label{fig:2pm_example:b}
          \includegraphics{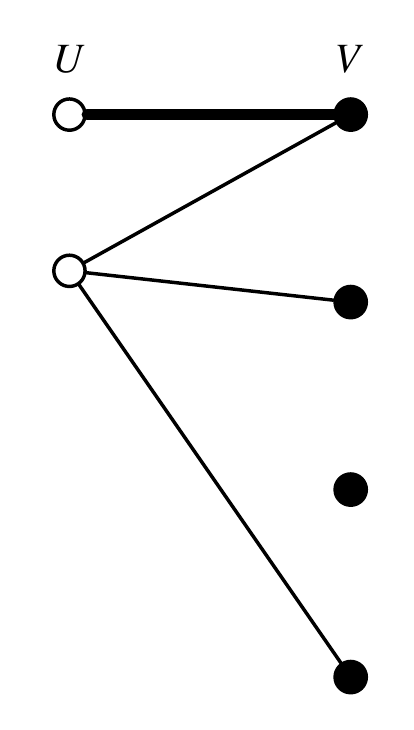}
        }
      } &
      \subfigure[]{\scalebox{0.7}{\label{fig:2pm_example:c}
          \includegraphics{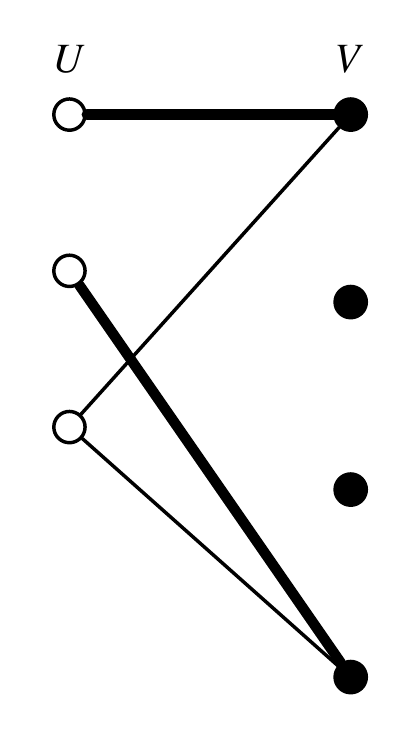}
        } 
      } &
      \subfigure[]{\scalebox{0.7}{\label{fig:2pm_example:d}
          \includegraphics{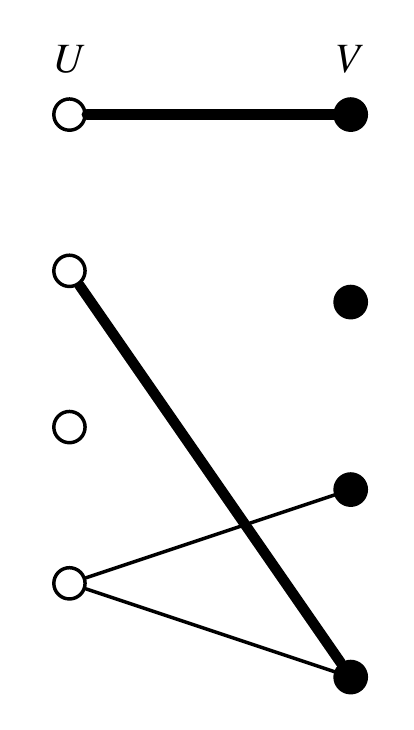}
        } 
      } 
    \end{tabular}
  \end{center}
  \caption{\small An example of the \SecondPM\ problem: The nodes in $U$ are
    keywords, the nodes in $V$ are bidders, and an edge $(u,v)$
    represents the fact that bidder $v$ bids 1 for keyword $u$. The
    budget of each bidder is 1.  In this example, neither of the last
    two keywords can be matched for profit.  In particular, the fourth
    keyword cannot be matched for profit because it has only one
    neighbor that is not already matched (and thus no bidder available
    to act as a second-price bidder).
  } \label{fig:2pm_example}
\end{figure}

Recall that for First-Price Matching or, as we know and love it,
maximum bipartite matching, the offline problem can of course be
solved optimally in polynomial time, whereas for the online problem we
have the trivial 2-competitive deterministic greedy algorithm and the
celebrated $e/(e-1)$-competitive randomized algorithm due to Karp,
Vazirani and Vazirani~\cite{Karp90}, both of which are best possible.

In contrast, we show that the \SecondPM\ problem is APX-hard
(Section~\ref{sec:hardness_offline2pm}).  We also give a
2-approximation algorithm for the offline problem
(Section~\ref{sec:approx_offline2pm}). We then turn to the online
version of the problem. Here, we show that no deterministic online
algorithm can get a competitive ratio better than $m$, where $m$ is
the number of keywords in the instance and that no randomized online
algorithm can get a competitive ratio better than 2
(Section~\ref{sec:lower_online2pm}).  On the other hand, we present a
randomized online algorithm that achieves a competitive ratio of $4
\sqrt{e}/(\sqrt{e}-1) \approx 5.08$
(Section~\ref{sec:rand_upper_online2pm}).  To obtain this competitive
ratio, we prove a generalization of the result due to Karp, Vazirani,
and Vazirani~\cite{Karp90} and Goel and Mehta~\cite{Goel08b} that the
{\em Ranking} algorithm for online bipartite matching achieves a
competitive ratio of $e/(e-1)$.

\subsection{Related Work}

As discussed above, the related First-Price Ad Auctions
problem\footnote{This problem has also been called the {\em Adwords}
  problem~\cite{Mehta07} and the {\em Maximum Budgeted Allocation}
  problem~\cite{Azar08,Chakrabarty08,Srinivasan08}.  It is an
  important special case of SMW
  \cite{Dobzinski06,Feige06b,Khot05,Lehmann06,Mirrokni08,Vondrak08},
  the problem of maximizing utility in a combinatorial auction in
  which the utility functions are submodular, and is also related to
  the Generalized Assignment Problem (GAP)
  \cite{Chekuri00,Feige06b,Fleischer06,Shmoys93}.  } has received a
fair amount of attention.  Mehta et al.~\cite{Mehta07} presented an
algorithm for the online version that achieves an optimal competitive
ratio of $e/(e-1)$ for the case when the bids are much smaller than
the budgets (i.e., when $R_{min} \rightarrow \infty$), a result also
proved by Buchbinder et al.~\cite{Buchbinder07}. When there is no
restriction on the values of the bids relative to the budgets (i.e.,
when $R_{min}$ can be as low as $1$), the best known competitive ratio
is 2~\cite{Lehmann06}.  For the offline version of the problem, a
sequence of
papers~\cite{Lehmann06,Andelman04,Feige06b,Azar08,Srinivasan08,Chakrabarty08}
culminating in a paper by Chakrabarty and Goel, and independently, a
paper by Srinivasan, show that the offline problem can be approximated
to within a factor of $4/(4-1/R_{min})$ and that there is no
polynomial time approximation algorithm that achieves a ratio better
than 16/15 unless $P=NP$~\cite{Chakrabarty08}.

The most closely related papers to this one are the works of Abrams,
Mendelevitch and Tomlin~\cite{Abrams07}, and of Goel, Mahdian,
Nazerzadeh and Saberi~\cite{Goel08a}.  
The latter looks at the online allocation problem
when the search engine is committed to charging under the GSP scheme,
with multiple slots per keyword.  They study two models, both of which
differ from the model studied in this paper, even for one slot.  Their
first model, called the {\em strict} model, allows a bidder's bid to
be above his remaining budget, as long as the remaining budget is
strictly positive.  However, as in our model, when a bidder is
allocated a slot, that bidder is never charged more than his remaining
budget.  Thus, in the strict model, a bidder $v$ with a negligible
amount of remaining budget can keep his bids high indefinitely, and as
long as bidder $v$ is never allocated another slot, this high bid can
determine the prices other bidders pay on many keywords.\footnote{In
  terms of gaming the system, this would be a great way for a bidder
  to potentially force his competitors to pay a lot for slots without
  backing those payments up with budget. This effect is even worse for
  the non-strict model.} Their second model, called the {\em
  non-strict} model, differs from the strict model in that bidders can
keep their bids positive even {\em after} their budget is
depleted. Thus, even after a bidder's budget is depleted, that bidder
can determine the prices that other bidders pay on keywords
indefinitely. However, a bidder is never charged more than his
remaining budget for a slot. Therefore, if a bidder is allocated a
slot after his budget is fully depleted, it gets the slot for free.

Under the assumption that bids are small compared to budgets,
Goel et al.\ present a $e/(e-1)$-competitive algorithm for the non-strict
model and a 3-competitive algorithm for the strict model. They also
show that $OPT_{strict}\le OPT_{non-strict} \le 2 OPT_{strict}$,
where these quantities refer to the optimal offline revenue of the search
engine in their models. Their algorithms build on the linear programming
formulation of Abrams et al.~\cite{Abrams07} for the offline version
of the strict model.

Interestingly, neither of their models or our model completely
dominates the others in terms of the optimal offline revenue. However,
it is fairly easy to show that neither $OPT_{strict}$ nor
$OPT_{non-strict}$ are ever more than a constant factor larger than
the optimal offline revenue from our model (see
Appendix~\ref{appendix:models} for a proof of this).  On the other
hand, the optimal offline revenue in our model can be $\Omega(m)$
times as big as $OPT_{non-strict}$ and $OPT_{strict}$, where $m$ is
the number of keywords, and this holds even when $R_{min}$ is a large
constant $c$ (see Figure~\ref{fig:higher_revenue}).  This is not
surprising, given the strong hardness of approximation in our model,
versus the fact that constant factor approximations are available,
even online, in theirs.

\begin{figure}
  \begin{center}
    \scalebox{0.7}{\includegraphics{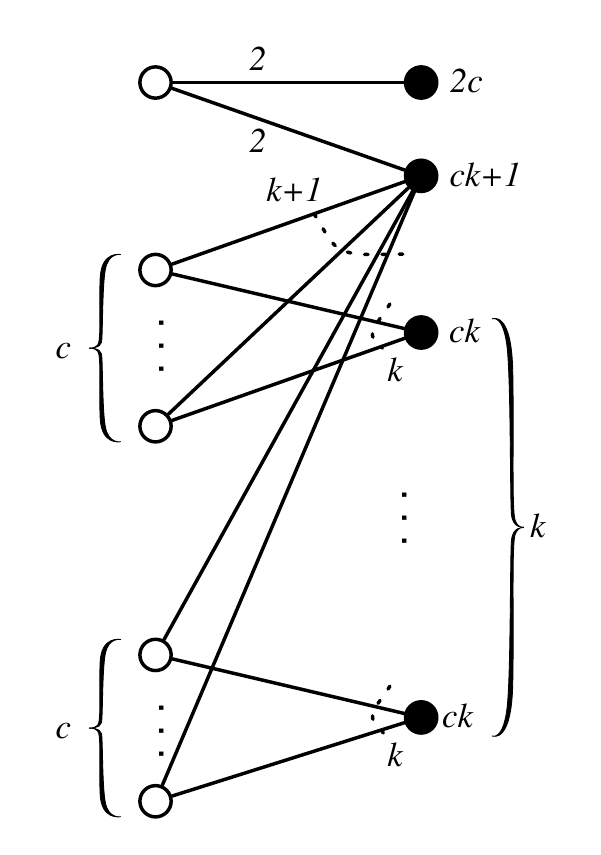}}
  \end{center}
  \caption{\small In this example, $R_{min}$ is equal to a constant $c$,
    i.e., every bid is at most $1/c$ of the budget of the bidder.  In
    the strict model, all keywords except the first must be allocated
    to the second bidder at a price of $k$ (or the remaining budget if
    it's smaller).  Thus, the total profit on this input for the
    strict model is at most $ck+3$.  On the other hand, in our model,
    if we allocate the first keyword to the second bidder, and then
    the next $c-1$ keywords to the second bidder, that bidder's budget
    is reduced to $k-1$. Thus, all of the remaining keywords can be
    allocated to the lower bidder at a price of $k-1$, for a total
    revenue exceeding $ck(k-1)$. For $k$ large, this ratio is $\Omega
    (k) = \Omega(m)$.  }\label{fig:higher_revenue}
\end{figure}

Notice also, that for the special case of \SecondPM\,
the strict model and our model are identical, whereas the non-strict
model reduces to standard maximum matching (on all those keywords which
have at least two bidders bidding on them).

%(MAYBE THIS NEXT PARAGRAPH SHOULD GO IN THE CONCLUSIONS SECTION???????????
% BEB: personally, I like it here.
We feel that our model, in which
bidders are not allowed to bid more than their remaining budget,
is more natural then the strict and non-strict
models. It seems inherently unfair that a bidder with negligible or no
budget should be able to indefinitely set high prices for other bidders.
We recognize of course that there may be technical issues
involving the scale and distributed nature of search engines that
could  make it difficult to implement our model precisely.

\section{Model and Notation}\label{sec:model}

We define the Second-Price Ad Auctions (2PAA) problem formally as
follows.  The input is a set of ordered keywords $U$ and bidders $V$.
Each bidder $v \in V$ has a budget $B_v$ and a nonnegative bid
$b_{u,v}$ for every keyword $u \in U$.  We assume that all of bidder
$v$'s bids $b_{u,v}$ are less than or equal to $B_v$.

Let $B_v(t)$ be the remaining budget of bidder $v$ immediately after
the $t$-th keyword is processed (so $B_v(0)= B_v$ for all $v$), and
let $b_{u,v}(t) = \min (b_{u,v}, B_v(t))$. (Both quantities are
defined inductively.)  A solution (or {\em second-price matching}) to
2PAA chooses for the $t$-th keyword $u$ a pair of bidders $v_1$ and
$v_2$ such that $b_{u,v_1}(t-1) \ge b_{u,v_2}(t-1)$, allocates the
slot for keyword $u$ to bidder $v_1$ and charges bidder $v_1$ a price
of $p(t)= b_{u,v_2}(t-1)$, the bid of $v_2$.  (We say that $v_1$ acts
as the \emph{first-price bidder} for $u$ and $v_2$ acts as the
\emph{second-price bidder} for $u$.)  The budget of $v_1$ is then
reduced by $p(t)$, so $B_{v_1}(t) = B_{v_1}(t-1) - p(t)$. For all
other bidders $v \ne v_1$, $B_{v}(t) = B_v(t-1)$.  The final value of
the solution is $\sum_t p(t)$, and the goal is to find a feasible
solution of maximum value.

In the offline version of the problem, all of the bids are known to
the algorithm beforehand, whereas in the online version of the
problem, keyword $u$ and the bids $b_{u,v}$ for each $v \in V$ are
revealed only when keyword $u$ arrives, at which point the algorithm
must irrevocably map $u$ to a pair of bidders without knowing the bids
for the keywords that will arrive later.

The special case referred to as Second-Price Matching (2PM) is the
special case where $b_{u,v}$ is either 0 or 1 for all $(u,v)$ pairs
and $B_v=1$ for all $v$. Perhaps, however, it is more intuitive to
think of it as a variant on maximum bipartite matching.  Viewing it
this way, the input is a bipartite graph $G = (U \cup V, E)$, and the
vertices in $U$ must be matched, in order, to the vertices in $V$ such
that the profit of matching $u \in U$ to $v \in V$ is $1$ if and only
if there is at least one additional vertex $v' \in V$ that is a
neighbor of $u$ and is unmatched at that time.

Note that in 2PM, a keyword can only be allocated for profit if its
degree is at least two.  Therefore, we assume without loss of
generality that for all inputs of 2PM, the degree of every keyword is
is at least two.

For an input to 2PAA, let $R_{min} = \min_{u,v} B_v / b_{u,v}$,
and let $m = |U|$ be the number of keywords.

\section{Hardness of Approximation of 2PAA}\label{sec:flexible}

In this section, we show that it is NP-hard to approximate 2PAA to a
factor better than $m/R_{min}$ for any constant $R_{min}$ independent
of $m$ (Theorem~\ref{thm:hardness_2paa}) and provide a trivial
algorithm with an approximation guarantee that meets this factor.

For a constant $c \geq 1$, let 2PAA($c$) be the version of 2PAA in
which we are promised that $R_{min} \geq c$.  The proof of the
following theorem (presented in full in
Appendix~\ref{appendix:hardness_2paa}) uses a gadget similar to the
instance presented in Figure~\ref{fig:higher_revenue}.  This gadget
exploits the sensitivity of the problem to small changes in budget.
In particular, if the budget of a key bidder is reduced by a small
amount, then the optimal revenue is much higher than if it is not
reduced.  We prove our inapproximability result by setting up the
reduction so that the budget of this bidder is reduced if and only if
the answer to the problem we reduce from is ``yes.''
\begin{theorem}\label{thm:hardness_2paa}
Let $c \geq 1$ be a constant integer.  For any constant $c' > c$, it
is NP-hard to approximate 2PAA($c$) to a factor of $m/c'$.
\end{theorem}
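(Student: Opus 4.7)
The plan is to reduce from an NP-hard decision problem whose witnesses are precise combinatorial selections (SUBSET SUM or PARTITION are natural candidates, since they inherently encode hitting an exact target value) and then to amplify the small consequence of that selection into an $\Omega(m/c')$ revenue gap using the gadget illustrated in Figure~\ref{fig:higher_revenue}. The intuition is that Figure~\ref{fig:higher_revenue} already shows that reducing a ``pivot'' bidder's budget by a single unit at the right moment unlocks $\Omega(k)$ extra revenue on each of $\Omega(k)$ subsequent keywords; the reduction will make extracting that single unit of reduction equivalent to solving the source instance.

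I would build the 2PAA($c$) instance in two phases. The first phase is a \emph{budget-tuning gadget} containing a distinguished bidder $v^*$ with initial budget $B \approx k$ for a parameter $k$ polynomial in the input size. The keywords of this phase are tied to elements of the source instance, each scaled so that its bid on $v^*$ is at most $B/c$, respecting $R_{min} \geq c$. The gadget is calibrated so that the only way to leave $v^*$ with remaining budget exactly $B-1$ at the end of the phase is to allocate keywords in a pattern corresponding to a YES-witness of the source instance. The second phase is essentially the gadget of Figure~\ref{fig:higher_revenue}: $\Theta(k)$ keywords on which $v^*$ is the intended second-price bidder, each paired with a fresh first-price bidder whose bid matches $v^*$'s remaining budget. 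If $v^*$ holds the magic budget $B-1 \approx k$, each such keyword yields revenue $\Theta(k)$ and the instance total is $\Theta(k^2)$; otherwise the second-price charge collapses to $O(1)$ per keyword and the instance total is $O(k)$. Choosing $k$ polynomial in the size of the source instance produces an $\Omega(m)$ multiplicative gap, and a careful accounting of the additive terms sharpens this to a ratio of $m/c'$ for any constant $c' > c$.

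The main obstacle is soundness rather than completeness. On a NO instance, the algorithm is not obligated to follow the ``intended'' behavior: it can pair the tuning keywords with other bidders, spend $v^*$'s budget in some unplanned way, or try to collect revenue from the amplification keywords via unintended pairings that do not rely on $v^*$'s exact budget. The argument will need a case analysis showing that every such deviation either leaves $v^*$ with a remaining budget other than $B-1$ (so the amplification phase collapses to $O(k)$ revenue) or directly loses enough revenue elsewhere that the total stays below the YES target. The $R_{min} \geq c$ constraint is what prevents the algorithm from reducing $v^*$'s budget by a full unit using a single large bid; this is precisely what forces the combinatorial substructure into the instance and makes the reduction faithful.
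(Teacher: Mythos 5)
Your overall strategy---reduce from PARTITION and gate a small budget reduction of a pivot bidder that unlocks a Figure~\ref{fig:higher_revenue}-style amplification---is the same family of argument as the paper's, but the specific gadget you describe breaks down once the promise $R_{min} \geq c$ is enforced, for any $c \geq 2$. If $v^*$'s initial budget is $B \approx k$, then every bid of $v^*$, including its bids on the amplification keywords, is at most $B/c$. When a fresh bidder bidding $B-1$ wins such a keyword, the price charged is the minimum of $v^*$'s bid and its remaining budget, hence at most $B/c$, and this is the same whether the remaining budget is $B$ or $B-1$; moreover $v^*$ never wins these keywords (its capped bid $B/c$ is below $B-1$), so its budget is never depleted and every amplification keyword pays roughly $B/c$ in the NO case as well. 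The promised ``collapse to $O(1)$ per keyword'' therefore never occurs, and YES and NO instances have comparable optimal revenue. For the collapse mechanism to work, the pivot must \emph{outbid} the fresh bidders when un-reduced, so that it is forced to win, pay their large bids, and exhaust its budget within at most $c$ keywords; that forces its bid to be about $k$ and hence its budget to be about $ck$, so the gated depletion must have size $\Theta((c-1)k)$, not a single unit. The paper's construction resolves exactly this tension with a two-stage depletion: a PARTITION-gated extraction of exactly $cW$ through the $e_1,e_2$ and $d_1,d_2$ sub-gadget, followed by a self-executing extraction of $(c-1)Wn^3$ using the first $c-1$ amplification keywords themselves, which lands on the magic remaining budget only if the gated extraction was exact.

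The soundness dichotomy you state---either the remaining budget is exactly $B-1$ or the revenue collapses---is also wrong as arithmetic of the mechanism: the per-keyword charge in the amplification phase equals whatever $v^*$'s remaining budget (capped bid) happens to be, so any play that leaves that budget anywhere between, say, $\Omega(k)$ and the fresh bid also earns $\Theta(k^2)$. Soundness thus requires showing that in a NO instance no feasible play can bring the pivot's budget below the amplification threshold at all, except by the self-destructive route of winning amplification keywords, which leaves only a small residue. This is precisely where the paper's reduction earns its keep: $f$ bids only $cW/2$ on each of $e_1$ and $e_2$, so reducing its budget by the required $cW$ forces $f$ to win both at price exactly $cW/2$, which is possible only if $d_1$ and $d_2$ have each been driven down to exactly $cW/2$, which in turn forces $|U_1| = |U_2| = n/2$ and $\sum_{i \in U_j} w_i = W/2$, i.e., a partition. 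Your sketch acknowledges that a case analysis is needed, but it does not supply this exactness mechanism, and with the trigger defined as ``remaining budget equal to $B-1$'' nothing rules out accidental or partial budget reductions in NO instances that still unlock the quadratic revenue.
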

\begin{proof}
See Appendix~\ref{appendix:hardness_2paa}.
\end{proof}

\begin{theorem}\label{thm:matching_2paa}
Let $c \geq 1$ be a constant integer. There is an $m/c$-approximation
to 2PAA($c$).
\end{theorem}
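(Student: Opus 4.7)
The plan is to show that a very simple algorithm achieves the $m/c$ bound. For each keyword $u$, let $\beta(u) = \max_{v_1 \neq v_2} \min(b_{u,v_1}, b_{u,v_2})$, i.e.\ the second-largest original bid on $u$, or equivalently the revenue of a stand-alone second-price auction on $u$. Let $S$ be the set of the $c$ keywords with the largest $\beta(u)$ values. The algorithm processes keywords in arrival order: for each $u \in S$ it runs the second-price auction between the two highest original bidders on $u$, and for each $u \notin S$ it pairs two bidders in a way that generates no revenue and no change in any budget (for instance, any pair with adjusted bid zero).

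First I would bound $OPT$ from above by $\sum_u \beta(u)$, since every payment in a feasible solution is at most the second-largest of two adjusted bids and adjusted bids are bounded by original bids. A one-line averaging argument (the $c$ largest among $m$ nonnegatives have sum at least $c/m$ of the total) then gives $\sum_{u \in S}\beta(u) \ge (c/m)\sum_u \beta(u) \ge (c/m)\,OPT$, so it suffices to show that the algorithm really collects $\beta(u)$ on every $u \in S$.

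The key step, and the only place the hypothesis $R_{min}\ge c$ is used, is a budget check. Because $|S|=c$, any bidder $v$ participates in at most $c$ of the chosen auctions. Every first-price payment by $v$ is bounded by $b_{u,v} \le B_v/c$, so just before $v$'s $k$-th first-price role ($k \le c$) in a chosen auction, its remaining budget is at least $B_v - (k-1)B_v/c = B_v(c-k+1)/c \ge B_v/c$, which still majorizes every one of $v$'s original bids; by the same accounting, $v$'s remaining budget is likewise at least $B_v/c$ whenever $v$ is about to play a second-price role. Hence every adjusted bid the algorithm observes in a chosen auction equals the corresponding original bid, the payment on each $u \in S$ is exactly $\beta(u)$, and therefore $ALG \ge (c/m)\,OPT$.

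The main obstacle to a clean write-up is the bookkeeping for keywords outside $S$: the model requires a pair of bidders to be selected for every arriving keyword, and these ``side'' auctions must not deplete the budget of any bidder that a later chosen auction relies on. I would handle this by pairing with a zero-bid bidder whenever possible, or by reserving a small pool of bidders disjoint from those participating in any chosen auction; neither modification affects the approximation guarantee.
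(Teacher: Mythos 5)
Your proposal is correct and follows essentially the same route as the paper: pick the $c$ keywords with the largest second-highest original bids, run the second-price auction on each with its top two bidders, observe that $R_{min}\ge c$ keeps every remaining budget at least $B_v/c$ so no bid is ever reduced, and compare against the upper bound $OPT \le \sum_u s_u$ via the top-$c$ averaging argument. Your closing worry about keywords outside the chosen set is moot, since a solution is free to leave keywords unallocated (as the paper's algorithm does), so no ``side'' auctions are needed at all.
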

\begin{proof}
For each keyword $u \in U$, let $s_u$ be the second-highest bid for
$u$.  Consider the algorithm that selects the $c$ keywords with the
highest values of $s_u$ and then allocates these keywords to get $s_u$
for each of them (i.e., chooses the two highest bidders for $u$).
Since no bidder bids more than $1/c$ of its budget for any keyword, no
bids are reduced from their original values during this allocation.
Hence, the profit of this allocation is at least $(c/m) \sum_{u \in U}
s_u$.  Since the value of the optimal solution cannot be larger than
$\sum_{u\in U} s_u$, it follows that this is an $m/c$-approximation to
2PAA($c$).
\end{proof}

\section{Offline Second-Price Matching}\label{sec:offline2pm}

In this section, we turn our attention to the offline version of the
special case of Second-Price Matching (2PM).  First, in
Section~\ref{sec:hardness_offline2pm}, we show that 2PM is APX-hard.
This stands in contrast to the maximum matching problem, the
corresponding special case of the First-Price Ad Auctions problem,
which is solvable in polynomial time.  Second, in
Section~\ref{sec:approx_offline2pm}, we give a 2-approximation for
2PM.

\subsection{Hardness of Approximation}\label{sec:hardness_offline2pm}

To prove our hardness result for 2PM, we use the following
result due to Chleb\'{i}k and Chelb\'{i}kov\'{a}.
\begin{theorem}[Chleb\'{i}k and Chelb\'{i}kov\'{a} \cite{Chlebik06}]\label{thm:hardness_vc}
  It is NP-hard to approximate Vertex Cover on 4-regular graphs to
  within $53/52$.
\end{theorem}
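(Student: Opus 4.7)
The theorem is a quantitative refinement of the long line of APX-hardness results for Vertex Cover on graphs of bounded degree, so the plan is to obtain it by a chain of gap-preserving reductions starting from a PCP-based hard constraint satisfaction problem. I would begin from a tight inapproximability result for Max-3SAT with bounded variable occurrences (combining the PCP-based hardness of Max-3SAT with standard techniques for bounding the number of occurrences of each variable), which provides an explicit multiplicative gap between satisfiable and at-most-$(1-c)$-satisfiable instances for some explicit constant $c>0$.

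From such an instance, I would apply the classical clause/variable reduction to Minimum Vertex Cover, keeping careful track of the number of vertices and edges introduced by each gadget so that the YES/NO satisfiability gap translates to a matching multiplicative gap in minimum vertex cover size. The next step, and the distinctive ingredient of the 4-regular version, is to modify the resulting graph so that every vertex has degree exactly $4$. This is accomplished by attaching small auxiliary ``equalizer'' subgraphs to low-degree vertices, chosen so that their contribution to the minimum vertex cover is an identical, exactly-computable amount on both YES- and NO-instances, and therefore does not erode the gap. Vertices of degree greater than $4$ can be split using standard ``cloud'' replacements that propagate the 0/1 labelling consistently; again the key is that the replacement is cover-tight.

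The last step is the numerical optimization that yields the specific ratio $53/52$: one balances the total size contributed by the degree-equalizing gadgets against the starting gap and tunes the base CSP so that the resulting ratio between YES- and NO-instance optima is as small as can be constructed. The hard part, and where most of the technical work of the cited paper lives, is precisely this optimization: enforcing the strong structural constraint of 4-regularity while simultaneously keeping the gadget analysis cover-tight enough that no slack creeps in and worsens the ratio past $53/52$. A looser analysis would yield only a weak hardness ratio such as $1+1/\mathrm{poly}$, so the real content of the theorem is in designing gadgets that are simultaneously small per low-degree vertex, degree-preserving for the high-degree ones, and have minimum vertex cover that can be computed exactly.
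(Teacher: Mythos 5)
This statement is not proved in the paper at all: it is an external result quoted verbatim from Chleb\'{i}k and Chleb\'{i}kov\'{a} and used as a black box to derive the $364/363$-hardness of 2PM, so there is no internal proof to compare against. Judged on its own terms, your proposal is not a proof but a plan, and the plan defers exactly the content that the theorem asserts. You say that the ``hard part, where most of the technical work of the cited paper lives,'' is tuning the gadgets to reach $53/52$ under exact $4$-regularity --- but that tuning \emph{is} the theorem; without exhibiting the equalizer and splitting gadgets, verifying they are cover-tight, and carrying through the arithmetic, nothing establishes the constant $53/52$ rather than some unspecified $1+\varepsilon$. APX-hardness of Vertex Cover on bounded-degree graphs is indeed routine by the route you describe, but the explicit ratio is precisely what cannot be waved through.

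There is also a substantive mismatch with how the cited result is actually obtained. The $53/52$ bound does not come from a clause/variable reduction from bounded-occurrence Max-3SAT; generic reductions of that shape lose far too much and yield much weaker explicit constants. Chleb\'{i}k and Chleb\'{i}kov\'{a} work in the Berman--Karpinski framework: they start from H{\aa}stad's optimal inapproximability of MAX-E3-LIN-2, pass to bounded-occurrence ``hybrid'' systems of linear equations via expander-like amplifier constructions (wheel amplifiers), and then translate these into regular graph instances with gadgets whose minimum covers are analyzed exactly; the specific constants fall out of that machinery. Your concern about making the graph exactly $4$-regular without eroding the gap is well placed --- naive padding does destroy gaps --- but naming the worry is not the same as resolving it. As written, the proposal neither reproduces the cited argument nor supplies an alternative one; if you want to include this statement with a proof, you should instead cite the source, as the paper does.
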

The precise statement of our hardness result is the following
theorem.
\begin{theorem}\label{thm:hardness_2pm}
  It is NP-hard to approximate 2PM to within a factor of $364/363$.
\end{theorem}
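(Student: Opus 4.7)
The plan is a gap-preserving reduction from Vertex Cover on 4-regular graphs, invoking Theorem~\ref{thm:hardness_vc}. Given a 4-regular graph $G = (V, E)$ with $|V| = n$ and $|E| = 2n$, I would construct a 2PM instance $I_G$ with a distinguished bidder $b_v$ for each $v \in V$ together with a constant-size edge gadget (of keywords and auxiliary bidders) for each $e = (u, v) \in E$ whose neighborhood includes $b_u$ and $b_v$. The gadget is designed so that its optimal 2PM contribution depends only on how many of $b_u, b_v$ are used as first-price winners somewhere in $I_G$, yielding profit $p_0 = 8$, $p_1 = 7$, or $p_2 = 6$ according to whether this count is $0$, $1$, or $2$. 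The arithmetic progression condition $p_0 - 2p_1 + p_2 = 0$ is exactly what makes the downstream analysis linear in the number of used vertex bidders.

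Letting $U \subseteq V$ denote the set of used vertex bidders and $a_j$ the number of edges $e$ with $|e \cap U| = j$, the handshake identities $a_0 + a_1 + a_2 = 2n$ and $a_1 + 2 a_2 = 4|U|$ (both consequences of 4-regularity) collapse the total profit to
\begin{equation*}
  \mathrm{OPT}(I_G) \;=\; a_0 p_0 + a_1 p_1 + a_2 p_2 \;=\; 16n - (a_1 + 2 a_2) \;=\; 16n - 4|U|,
\end{equation*}
which is maximized by taking $|U|$ as small as possible. Two structural claims then pin down the optimum: (i) in any optimal solution $U$ must be a vertex cover of $G$, because an uncovered edge $e = (u, v)$ would leave both $b_u, b_v$ permanently free and allow the gadget to jump from profit $p_1$ or $p_2$ up to $p_0$ by using one of them as a first-price winner, contradicting $u, v \notin U$; and (ii) every vertex cover $S$ of $G$ can be realized as $U$ in some feasible 2PM solution. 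Together these yield $\mathrm{OPT}(I_G) = 16n - 4\,\mathrm{VC}(G)$.

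For the gap calculation, apply Theorem~\ref{thm:hardness_vc} to 4-regular instances with hard VC threshold $k = n/2$. The YES case gives $\mathrm{OPT}(I_G) \geq 16n - 2n = 14n$, while the NO case gives $\mathrm{OPT}(I_G) < 16n - 4(53/52)(n/2) = 363n/26$, so the ratio is exactly $(14n)(26)/(363n) = 364/363$, proving the claim.

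The two main technical obstacles are the gadget construction and the realizability lemma (ii). The gadget itself should be a short arrangement of auxiliary bidder-keyword pairs with $b_u$ and $b_v$ inserted at specific positions and processed in a fixed order, with the profit values $p_0, p_1, p_2$ verified by a finite case analysis on the 2PM dynamics; this is tedious but routine. Realizability is the more delicate step: given $S$, I need an injective assignment $f : S \to E$ with $v \in f(v)$ (so that $b_v$ plays first-price winner inside gadget $f(v)$) plus a global ordering of the gadgets under which $b_v$ is still free when $f(v)$'s gadget requires it. Hall's condition for $f$ holds comfortably because every $T \subseteq S$ satisfies $|\{e : e \cap T \neq \emptyset\}| \geq 4|T| - 2|T| = 2|T|$; the ordering then comes from a topological sort of the dependency digraph on $S$ (with an arc from $v$ to $w$ whenever $f(v) = (v, w)$ and $w \in S$), after destroying any directed cycles by local swaps of $f$-values using the Hall slack. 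This last step is where I expect the bulk of the work.
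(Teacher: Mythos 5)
Your reduction as described cannot work, because its two central claims contradict each other. You posit an edge gadget whose optimal contribution is $p_0=8$, $p_1=7$, or $p_2=6$ according to how many of $b_u,b_v$ lie in the set $U$ of vertex bidders used as first-price winners, so that the total profit is $16n-4|U|$; but then nothing in this accounting ever rewards putting a vertex into $U$, so the global optimum is simply $U=\emptyset$ with value $16n$, independent of $G$, and the identity $\mathrm{OPT}(I_G)=16n-4\,\mathrm{VC}(G)$ fails. Your claim (i) is supposed to force $U$ to be a vertex cover, but its justification is backwards: if $e=(u,v)$ is uncovered, the gadget already sits at the top value $p_0$ by your own hypothesis, and matching $b_u$ somewhere would only move it (and $u$'s other gadgets) down to $p_1$; there is no improving move and hence no contradiction with optimality. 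Nor can the gadget be fixed by reversing the ordering ($p_1>p_0$): a bidder that is consumed elsewhere can only shrink the set of bidders available to the gadget, so its achievable profit cannot increase, and in any case your requirement that the contribution depend only on the count $|e\cap U|$ and not on \emph{where or when} those bidders are consumed ignores the arrival-order dynamics of 2PM. So the "tedious but routine" gadget you defer is in fact the missing (and unobtainable) ingredient; the Hall/topological-sort machinery for realizability is a symptom of this wrong structure rather than the real difficulty.

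The paper's reduction has the opposite architecture and avoids all of this. It builds a \emph{vertex} gadget for each $v$ (keywords $h_v,l_v$ and private bidders $y_v,z_v$) and, for each edge $e$, a single keyword with a \emph{private} bidder $x_e$ in addition to its two endpoint bidders. Leaving $b_v$ unmatched costs exactly $1$ inside $v$'s own gadget, and an unmatched $b_v$ then serves as a reusable second-price bidder for \emph{all} incident edge keywords (each matched to its $x_e$); consequently the unmatched vertex bidders must form a vertex cover, and one gets the clean identity $OPT_{2P}=2|V(G)|+|E(G)|-OPT_{VC}$ (Lemma~\ref{lem:hardness_2pm}). Note also that the paper does not assume a gap at the threshold $k=n/2$ as your final calculation does; it gives an approximation-preserving argument: an $\alpha$-approximation for 2PM yields a vertex cover of size at most $(1-1/\alpha)(2|V(G)|+|E(G)|)+OPT_{VC}/\alpha$, and since $4$-regularity gives $2|V(G)|+|E(G)|\le 8\,OPT_{VC}$, this is an $((8\alpha-7)/\alpha)$-approximation for Vertex Cover, whence Theorem~\ref{thm:hardness_vc} forces $\alpha\ge 364/363$. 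If you want to salvage your write-up, you should adopt this vertex-gadget/private-$x_e$ structure (or something with the same "free vertices are reusable second-price bidders" property) and replace the threshold computation with the approximation-preserving form.
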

\begin{proof}
  See Appendix~\ref{appendix:hardness_2pm}
\end{proof}

\subsection{A 2-Approximation Algorithm}\label{sec:approx_offline2pm}

Consider an instance $G = (U \cup V,E)$ of the 2PM problem.  We
provide an algorithm that first finds a maximum matching $f : U
\rightarrow V$ and then uses $f$ to return a second-price matching
that contains at least half of the keywords matched by
$f$.\footnote{Note that $f$ is a partial function.}  Given a matching
$f$, call an edge $(u,v) \in E$ such that $f(u) \not= v$ an {\em
  up-edge} if $v$ is matched by $f$ and $f^{-1}(v)$ arrives before
$u$, and a {\em down-edge} otherwise.  Recall that we have assumed
without loss of generality that the degree of every keyword in $U$ is
at least two.  Therefore, every keyword $u \in U$ that is matched by
$f$ must have at least one up-edge or down-edge.
Theorem~\ref{thm:2_approx} shows that the following algorithm, called
ReverseMatch, is a $2$-approximation for 2PM.

\begin{center}
\fbox{
\begin{tabular}{l}
{\bf ReverseMatch Algorithm:}\\
\hline
{\em Initialization:} \\
Find an arbitrary maximum matching $f:U\rightarrow V$ on $G$. \\
\hline
  {\em Constructing a 2nd-price matching:} \\
  Consider the matched keywords in reverse order of their arrival. \\
  For each keyword $u$: \\
  \hspace{.2in} If keyword $u$ is adjacent to a down-edge $(u,v)$: \\
    \hspace{.4in} Assign keyword $u$ to bidder $f(u)$ (with $v$ acting as the second-price bidder). \\
  \hspace{.2in} Else: \\
    \hspace{.4in} Choose an arbitrary bidder $v$ that is adjacent to keyword $u$. \\
    \hspace{.4in} Remove the edge $(f^{-1}(v),v)$ from $f$. \\
    \hspace{.4in} Assign keyword $u$ to bidder $f(u)$ (with $v$ acting as the second-price bidder). 
\end{tabular}
}
\end{center}

\begin{theorem}\label{thm:2_approx}
 The ReverseMatch algorithm is a 2-approximation.
\end{theorem}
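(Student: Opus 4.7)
The plan is to compare the algorithm's output with the initial maximum matching $f$. Any feasible 2PM solution induces a matching in $G$ (via its first-price assignments), so $\text{OPT} \le |f|$; it therefore suffices to show that ReverseMatch returns a feasible 2PM of value at least $|f|/2$.

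I would first verify that each iteration is well-defined and that the output is feasible. Because every matched $u$ has degree at least two, it has a neighbor $v \ne f(u)$, so the else branch can always pick such a $v$; and in that branch, the absence of any down-edge (with respect to the current $f$) forces every such $v$ to be matched in $f$ with $f^{-1}(v)$ arriving strictly before $u$, making the removal of $(f^{-1}(v), v)$ legal. For feasibility, distinct processed keywords receive distinct first-price bidders because $f$ is a matching; the only subtle point is that the second-price bidder $v$ chosen for $u$ is not the first-price bidder of any earlier-arriving processed keyword. In the down-edge case, $v$ is either already missing from the image of the current (hence final) $f$, or $f^{-1}(v)$ arrives after $u$, so any keyword it serves arrives after $u$. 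In the else case, removing $(f^{-1}(v), v)$ at iteration $u$ unmatches $v$ in $f$ permanently, so $v$ serves no processed keyword at all.

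Finally I would bound the value by a direct charging argument. Let $A$ and $B$ denote the numbers of iterations taking the down-edge and else branches; the output value is $A + B$. Every initially matched keyword is accounted for in exactly one of two disjoint ways: it is itself processed (contributing one to $A + B$), or it is unmatched during some else iteration for a later-arriving keyword (contributing one to $B$) and subsequently skipped by the reverse traversal. Hence $|f| = (A + B) + B = A + 2B$, so the output has value $A + B \ge (A + 2B)/2 = |f|/2 \ge \text{OPT}/2$.

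The main obstacle I anticipate is the bookkeeping around the dynamic modification of $f$: one must check both that the two fates above are disjoint and that together they exhaust the initially matched keywords. Processing in reverse arrival order makes this go through, because an else iteration for $u$ can only unmatch some $f^{-1}(v)$ preceding $u$ (hence not yet visited), and a keyword that has been unmatched in $f$ is necessarily skipped once the traversal reaches it. The charge of two matched keywords against one unit of output profit per else iteration is exactly what yields the factor of two.
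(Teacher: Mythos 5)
Your proposal is correct and takes essentially the same route as the paper's proof: bound $\mathrm{OPT}$ by the size of the maximum matching $f$, observe that every processed keyword yields profit $1$, and charge each else-branch iteration for the one matched keyword it sacrifices, giving value at least $|f|/2$. Your write-up simply makes explicit the feasibility and bookkeeping details (the $|f| = A + 2B$ accounting and the availability of the chosen second-price bidders) that the paper's three-sentence proof leaves implicit.
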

\begin{proof}
  See Appendix~\ref{appendix:2_approx}.
\end{proof}

\section{Online Second-Price Matching}\label{sec:online2pm}

In this section, we consider the online 2PM problem, in which the
keywords arrive one-by-one and must be matched by the algorithm as
they arrive.  We start, in Section~\ref{sec:lower_online2pm}, by giving
a simple lower bound showing that no deterministic algorithm can
achieve a competitive ratio better than $m$, the number of keywords.
Then we move to randomized online algorithms and show that no
randomized algorithm can achieve a competitive ratio better than $2$.
In Section~\ref{sec:rand_upper_online2pm}, we provide a randomized
online algorithm that achieves a competitive ratio of
$4\sqrt{e}/(\sqrt{e}-1) \approx 5.083$.

\subsection{Lower Bounds}\label{sec:lower_online2pm}

The following theorem establishes our lower bound on deterministic
algorithms, which matches the trivial algorithm of arbitrarily
allocating the first keyword to arrive, and refusing to allocate any
of the remaining keywords.
\begin{theorem}
For any $m$, there is an adversary that creates a graph with $m$
keywords that forces any deterministic algorithm to get a competitive
ratio no better than $1/m$.
\end{theorem}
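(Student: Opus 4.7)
The plan is to exhibit an adaptive adversary that (i) always admits an offline second-price matching of size $m$ but (ii) permits any deterministic online algorithm to profitably match at most one keyword. I would use $m+1$ bidders $v_1, \ldots, v_{m+1}$ and maintain a single ``hub'' bidder $h$, initialized to $v_1$. When keyword $u_t$ arrives, it is presented with exactly two neighbors: the current hub $h$ and the freshly introduced leaf $v_{t+1}$. The adaptive rule is: immediately after the algorithm processes $u_t$, if the algorithm matched $u_t$ using some bidder $v_a$ as the first-price bidder (so $v_a$'s remaining budget is now $0$), the adversary updates $h := v_a$ for the remainder of the input; otherwise $h$ is unchanged.

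To bound the algorithm's profit, note that the moment the algorithm makes its first match, at some step $t$, the hub becomes $v_a$, which is now a matched (depleted) bidder. Every subsequent keyword $u_s$ ($s > t$) arrives with $v_a$ as one of its only two neighbors, but in \SecondPM{} a keyword can be matched for profit only if it has two distinct \emph{unmatched} neighbors. Hence no subsequent keyword can be matched, giving $\text{ALG} \leq 1$.

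The main task is to show that the instance produced by this adversary always admits an offline matching of size $m$. I would split on the first step $t$ at which the online algorithm matches (taking $t = m+1$ if it never does) and on whether $v_a$ equals the old hub $v_1$ or the freshly introduced leaf $v_{t+1}$. In the former case (including the ``never matches'' subcase), OPT assigns each $u_s$ using the non-$v_1$ neighbor as the first-price bidder and $v_1$ as the second-price bidder, so $v_1$ is never depleted. In the latter case, OPT uses $v_1$ as second-price for $s < t$, then ``pivots'' at $u_t$ by using $v_1$ as first-price and $v_{t+1}$ as second-price, and for $s > t$ uses the fresh leaf of $u_s$ as first-price and $v_{t+1}$ as second-price. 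In both schemes the designated second-price bidder is never used as a first-price bidder, so all $m$ keywords are matched profitably, giving $\text{OPT} = m$ and a competitive ratio of at most $1/m$.

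The main obstacle I expect is verifying consistency at the pivot step: the offline matching must keep \emph{both} the old hub (needed as second-price for $u_1, \ldots, u_{t-1}$) and the new hub (needed as second-price for $u_{t+1}, \ldots, u_m$) available simultaneously with a profitable assignment of $u_t$ itself. The fact that $u_t$'s two neighbors are exactly $\{v_1, v_{t+1}\}$ is what makes this possible: whichever of these two bidders the online algorithm just depleted, OPT preserves by swapping the first- and second-price roles at $u_t$. The rest of the verification is direct bookkeeping.
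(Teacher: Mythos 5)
Your proof is correct and follows essentially the same adaptive-adversary strategy as the paper: feed keywords that remain harmless until the algorithm makes its first profitable match, then make the depleted bidder a neighbor of every subsequent keyword so the algorithm earns at most $1$, while the offline solution keeps a never-depleted second-price bidder and matches all $m$ keywords. The only cosmetic difference is that you reuse a fixed hub $v_1$ before the first match (forcing your explicit two-case pivot analysis), whereas the paper gives each early keyword two fresh bidders and handles the commitment step by symmetry.
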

\begin{proof}
The adversary shows the algorithm a single keyword (keyword $1$) that
has two adjacent bidders, $a_1$ and $b_2$.  If the algorithm does not
match keyword $1$ at all, a new keyword $2$ arrives that is adjacent
to two new bidders $a_2$ and $b_2$.  The adversary continues in this
way until either $m$ keywords arrive or the algorithm matches a
keyword $k < m$.  In the first case, the algorithm's performance is at
most $1$ (because it might match keyword $m$), whereas the adversary
can match all $m$ keywords.  Hence, the ratio is at most $1/m$.

In the second case, the adversary continues as follows.  Suppose
without loss of generality that the algorithm matches keyword $k$ to
$a_k$.  Then each keyword $i$, for $k+1 \leq i \leq m$, has one edge
to $a_k$ and one edge to a new bidder $c_k$.  Since the algorithm
cannot match any of these keywords for a profit, its performance is
$1$.  The adversary can clearly match each keyword $i$ for profit, for
$1 \leq i \leq k -1$, and if it matches keyword $k$ to $b_k$, then it
can use $a_k$ as a second-price bidder for the remaining keywords to
match them all to the $c_i$'s for profit.  Hence, the adversary can
construct a second-price matching of size at least $m$.
\end{proof}

The following theorem establishes our lower bound of $2$ for any
(randomized) online algorithm for 2PM.
\begin{theorem}\label{thm:rand_lower_online2pm}
The competitive ratio of any randomized algorithm for 2PM must
be at least $2$.
\end{theorem}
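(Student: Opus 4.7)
I would apply Yao's minimax principle: to show that every randomized online algorithm has competitive ratio at least $2$, it suffices to exhibit a distribution $\mathcal{D}$ over 2PM instances for which every deterministic online algorithm $A$ satisfies $\mathbb{E}_{I\sim\mathcal{D}}[A(I)] \le \tfrac{1}{2}\,\mathbb{E}_{I\sim\mathcal{D}}[\mathrm{OPT}(I)]$. Any randomized algorithm is a distribution over deterministic ones, so this forces some specific realization of the input to have ratio $\ge 2$ against the algorithm's random choices.

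The distribution I plan to use is built around a first keyword $u_1=\{a,b\}$. With some carefully chosen probability $p_0$, the adversary stops after $u_1$ alone (so $\mathrm{OPT}=1$). Otherwise, it picks a ``key'' bidder $z$ uniformly from $\{a,b\}$ and extends the input with $n-1$ keywords of the form $\{z,c_i\}$ where $c_2,\ldots,c_n$ are fresh bidders. On a long instance one has $\mathrm{OPT}=n$, since OPT can allocate $u_1$ to the non-key bidder (keeping $z$ free) and then each $\{z,c_i\}$ to $c_i$ with $z$ serving as the second-price bidder, so $z$ is never allocated to any keyword. An algorithm that allocates $u_1$ to the key bidder $z$, by contrast, freezes $z$ and cannot match any of the subsequent $n-1$ keywords, since each one then has only a single free neighbor.

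To complete the analysis, I would parametrize a deterministic algorithm by its first move on $u_1$ (allocate to $a$, allocate to $b$, or skip) and, using the symmetry between $a$ and $b$, compute the expected value of each strategy under $\mathcal{D}$ as a function of $p_0$ and $n$. Choosing $p_0$ so as to equalize the ``match'' and ``skip'' strategies and letting $n\to\infty$, I would show that $\mathbb{E}[\mathrm{OPT}]/\mathbb{E}[A]$ approaches $2$ for every such strategy, yielding the claimed lower bound.

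The main obstacle I anticipate lies in controlling the ``skip $u_1$'' strategy, which distinguishes this lower bound from the classical bipartite-matching one. Skipping forfeits only a single unit of profit on $u_1$ but keeps both $a$ and $b$ available as second-price bidders, so on a long instance the algorithm can still match all $n-1$ tail keywords by routing each match through the fresh $c_i$. Without the short branch one cannot exceed a trivial ratio, and a naive balancing only yields $3/2$. Pushing the bound all the way to $2$ will hinge on choosing $p_0$ delicately, and may require iterating the commit-versus-skip construction so that the per-stage loss compounds across multiple rounds rather than being amortized over a single decision.
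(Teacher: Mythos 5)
There is a genuine gap: the distribution you actually specify cannot give a bound of $2$ for any choice of $p_0$, and your own back-of-the-envelope observation that ``a naive balancing only yields $3/2$'' is exactly where the construction stops. Concretely, on your instances a deterministic algorithm that commits $u_1$ to $a$ gets expected value $p_0 + (1-p_0)\frac{n+1}{2}$ (with probability $1/2$ the key bidder is the other one and it then collects the entire tail), while the skip strategy gets $(1-p_0)(n-1)$, against $\expect[\mathrm{OPT}] = p_0 + (1-p_0)n$. Equalizing the two strategies gives algorithm value $2$ and $\expect[\mathrm{OPT}] = 3$, i.e.\ a ratio of exactly $3/2$, and moving $p_0$ in either direction only lowers the ratio. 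The problem is structural, not a matter of ``choosing $p_0$ delicately'': with a single random commit-versus-guess decision, skipping costs the algorithm only one unit relative to OPT, and committing costs half the tail in expectation, so no one-shot version of this gadget can force a factor of $2$.

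The missing idea is precisely the iterated construction you gesture at in your last sentence, and it is what the paper does: the adversary builds a random chain in which each new keyword is adjacent to one of the \emph{previous} keyword's two bidders, chosen uniformly at random, plus a fresh bidder, for $m$ steps. Then OPT is $m$ (always allocate to the bidder the next keyword will avoid), and every time the algorithm matches a keyword it risks, with probability $1/2$, landing in a ``restricted'' state where the next keyword has only one available neighbor. The analysis is not just plugging in a parameter: the paper sets up coupled recursions $X^*_k = \frac12(X^*_{k-1}+Y^*_{k-1})+1$ and $Y^*_k = \frac12(X^*_{k-1}+Y^*_{k-1})$ for the greedy algorithm's expected value on ``normal'' and ``restricted'' instances, solves them to get $(m+1)/2$, and then proves by a separate induction that no deterministic algorithm (in particular none that strategically skips keywords to preserve second-price bidders) beats Greedy against this distribution. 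Your proposal contains neither the chain distribution nor this optimality-of-Greedy induction, so as written it establishes only a $3/2$ lower bound.
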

\begin{proof}
  See Appendix~\ref{appendix:rand_lower_online2pm}.
\end{proof}

\subsection{A Randomized Competitive Algorithm}\label{sec:rand_upper_online2pm}

In this section, we provide an algorithm that achieves a competitive
ratio of $2\sqrt{e}/(\sqrt{e} - 1) \approx 5.083$.  The result builds
on a new generalization of the result that the Ranking algorithm for
online bipartite matching achieves a competitive ratio of $e/(e-1)
\approx 1.582$.  This was originally shown by Karp, Vazirani, and
Vazirani \cite{Karp90}, though a mistake was recently found in their
proof by Krohn and Varadarajan and corrected by Goel and Mehta
\cite{Goel08b}.

The online bipartite matching problem is merely the first-price
version of 2PM, i.e., the problem in which there is no requirement for
there to exist a second-price bidder to get a profit of $1$ for a
match.  The Ranking algorithm chooses a random permutation on the
bidders $V$ and uses that to choose matches for the keywords $U$
as they arrive.  This is described more precisely below.

\begin{center}
\fbox{
\begin{tabular}{l}
{\bf Ranking Algorithm:}\\
\hline
Initialization: \\
Choose a random permutation (ranking) $\sigma$ of the bidders $V$.\\
\hline
  Online Matching: \\
  Upon arrival of keyword $u \in U$: \\
  \hspace{.2in} Let $N(u)$ be the set of neighbors of $u$ that have not been matched yet. \\
  \hspace{.2in} If $N(u) \neq \emptyset$, match $u$ to the bidder $v \in N(u)$ that minimizes $\sigma(v)$.
\end{tabular}
}
\end{center}

Karp, Vazirani, and Vazirani, and Goel and Mehta prove the following
result.
\begin{theorem}[Karp, Vazirani, and Vazirani \cite{Karp90}
    and Goel and Mehta \cite{Goel08b}]\label{thm:old_ranking} 
  The Ranking algorithm for online bipartite matching achieves a
  competitive ratio of $e/(e-1) + o(1)$.
\end{theorem}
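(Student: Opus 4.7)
The plan is to follow the randomized analysis of Ranking in the form corrected by Goel and Mehta. First I would reduce to the case where an optimal matching $M^*$ has size $n$ and is a perfect matching between $U$ and $V$, by discarding unmatched vertices and relabelling so that the bidder optimally matched to keyword $u_i$ is called $v_i$. The goal then becomes to show that $\expect_\sigma[|\ranking(\sigma)|] \geq (1 - 1/e - o(1))\,n$, where the expectation is over a uniformly random permutation $\sigma$ of the $n$ bidders.

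Next, for each rank $t \in \{1,\ldots,n\}$, I would define
$$x_t \;:=\; \pr_\sigma[\text{the bidder at rank $t$ in $\sigma$ is matched by Ranking}],$$
so that the expected size of the matching produced by Ranking equals $\sum_{t=1}^n x_t$. The heart of the argument is the inequality
$$n\,(1 - x_t) \;\leq\; \sum_{s=1}^{t} x_s \qquad \text{for every } t \in \{1,\ldots,n\}.$$
I would establish this via a coupling/exchange argument. Fix a rank $t$ and a bidder $v$, and let $\sigma_v^{(t)}$ denote the permutation obtained from $\sigma$ by removing $v$ from its current position and re-inserting it at rank $t$, keeping the relative order of the remaining bidders unchanged. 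Comparing the executions of Ranking on $\sigma$ and on $\sigma_v^{(t)}$, a careful trace through the greedy dynamics shows that whenever $v$ is unmatched in $\sigma_v^{(t)}$, some bidder at rank at most $t$ in $\sigma$ must be matched to the keyword that $M^*$ pairs with $v$. Averaging over uniformly random $v$ and $\sigma$, and using that the distribution of $\sigma_v^{(t)}$ is uniform over permutations that place $v$ at rank $t$, yields the claimed inequality.

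Finally I would solve the recurrence. Writing $S_t := \sum_{s=1}^t x_s$, the inequality above is equivalent to $S_t \geq \tfrac{n}{n+1}(1 + S_{t-1})$, and substituting $c_t := n - S_t$ gives $c_t \leq \tfrac{n}{n+1}\,c_{t-1}$ with $c_0 = n$. Hence $c_n \leq n\bigl(n/(n+1)\bigr)^n \to n/e$, so that $S_n \geq n\bigl(1 - 1/e - o(1)\bigr)$, which is exactly the competitive ratio $e/(e-1) + o(1)$ asserted by the theorem. The main obstacle is the coupling step: tracking how a single bidder's repositioning cascades through the entire online run of Ranking, and verifying that the uniformity of $\sigma_v^{(t)}$ is preserved under the reparameterization $(\sigma, v) \mapsto (\sigma_v^{(t)}, v)$, requires careful bookkeeping, and this is precisely where the original KVV argument contained the subtle error subsequently corrected by Goel and Mehta.
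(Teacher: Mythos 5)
Your proposal is correct and follows essentially the same route as the paper: the paper's Appendix~\ref{appendix:ranking} proves the generalization (Theorem~\ref{thm:ranking}) by exactly this Birnbaum--Mathieu style argument, and setting $k=1$ there recovers your steps verbatim --- reduction to a perfect matching via monotonicity, the key inequality $1 - x_t \leq \frac{1}{n}\sum_{s\leq t} x_s$ proved by moving a uniformly random bidder $v$ to rank $t$ (Lemmas~\ref{lem:technical} and~\ref{lem:rankingmain}), and the recurrence $S_t \geq \frac{n}{n+1}(1+S_{t-1})$ giving $S_n \geq n(1-1/e-o(1))$. The coupling step you flag as delicate is precisely what the paper handles with its induction showing $R_{s-1}(\sigma') \subseteq R_s(\sigma)$, so your outline matches the intended proof.
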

In order to state our generalization of this result, we define
the notion of a $\emph{left $k$-copy}$ of a bipartite graph $G = (U
\cup V, E)$.  Intuitively, a left $k$-copy of $G$ makes $k$ copies of
each keyword $u \in U$ such that the neighborhood of a copy of $u$ is
the same as the neighborhood of $u$.  More precisely, we have the
following definition.
\begin{definition}\label{def:kcopy}
Given a bipartite graph $G = (U_G \cup V, E_G)$, define a {\bf left
  $k$-copy} of $G$ to be a graph $H = (U_H \cup V,E_H)$ for which
$|U_H| = k|U_G|$ and for which there exists a map $\zeta: U_H
\rightarrow U_G$ such that
\begin{itemize}
\item
for each $u_G \in U_G$ there are exactly $k$ vertices $u_H \in U_H$
such that $\zeta(u_H) = u_G$, and

\item
for all $u_H \in U_H$ and $v \in V$, $(u_H, v) \in E_H$ if and
only if $(\zeta(u_H), v) \in E_G$.
\end{itemize}
\end{definition}

Our generalization of Theorem~\ref{thm:old_ranking} describes the
competitive ratio of Ranking on a graph $H$ that is a $k$-copy of $G$.
It's proof, presented in Appendix~\ref{appendix:ranking}, is a 
modification to the proof of Theorem~\ref{thm:old_ranking} presented
by Birnbaum and Mathieu~\cite{Birnbaum08}.
\begin{theorem}\label{thm:ranking}
Let $G = (U_G \cup V, E_G)$ be a bipartite graph that has a maximum
matching of size $OPT_{1P}$, and let $H = (U_H \cup V, E_H)$ be a left
$k$-copy of $G$.  Then the expected size of the matching returned by
Ranking on $H$ is at least
\begin{equation*}
k OPT_{1P} \paren{1 - \frac{1}{e^{1/k}} + o(1)} \enspace .
\end{equation*}
\end{theorem}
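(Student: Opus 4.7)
The approach is to adapt to the $k$-copy setting the proof of Theorem~\ref{thm:old_ranking} given by Birnbaum and Mathieu~\cite{Birnbaum08}. I would first reduce to the case that $G$ admits a perfect matching $M^*$ of size $n := OPT_{1P}$: if not, restrict $G$ to the vertices saturated by a maximum matching and observe that Ranking on $H$ is at least as large as Ranking on the corresponding restricted $k$-copy. Next, for a uniformly random permutation $\sigma$ of $V$ and each rank $t \in [n]$, set
\[ R_t \;=\; \Pr\!\bigl[\sigma^{-1}(t)\text{ is matched by Ranking on } H\bigr]. \]
The expected size of the matching returned by Ranking on $H$ equals $\sum_{t=1}^{n} R_t$, and the goal is to show this is at least $kn\bigl(1-e^{-1/k}\bigr) - O(1)$, which is exactly the bound claimed.

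The crux is the following key inequality, which I aim to establish for every $t \in [n]$:
\[ 1 - R_t \;\le\; \frac{1}{kn}\sum_{s=1}^{t} R_s \;+\; O(1/n). \]
Granting this, let $S_t = \sum_{s \le t} R_s$ and rewrite the inequality as $S_t - S_{t-1} \ge 1 - S_t/(kn) - O(1/n)$. This is the discrete analogue of the linear ODE $y'(x) = 1 - y(x)/k$ with $y(0)=0$, whose solution is $y(x) = k\bigl(1 - e^{-x/k}\bigr)$. A routine discrete-to-continuous passage, identical to the one in~\cite{Birnbaum08}, yields $S_n \ge kn\bigl(1 - e^{-1/k}\bigr) - O(1)$, as required.

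The main obstacle is proving the key inequality, and this is the only place where the $k$-copy structure is used. The plan is to generalize the shifting argument of Birnbaum and Mathieu. Fix $v \in V$ and a permutation $\pi$ of $V \setminus \{v\}$, and let $\pi^t$ denote the permutation obtained by inserting $v$ at rank $t$. In the classical $k=1$ case, one shows that if $v$ is unmatched in Ranking$(G,\pi^t)$, then the $M^*$-partner $u^*(v)$ must be matched in Ranking to some bidder of rank strictly less than $t$; this produces one ``witness'' per unmatched pair $(v,t)$ and accounts for the $1/n$ factor. The new ingredient for general $k$ is that, because $v$ is available throughout the run (it is never matched), every one of the $k$ copies of $u^*(v)$ in $H$ is matched, and each to a \emph{distinct} bidder of rank $< t$ (since no bidder can be matched twice). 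Thus every unmatched pair $(v,t)$ produces $k$ distinct witnesses of rank $< t$, and pushing this through the Birnbaum--Mathieu swap accounting replaces the denominator $n$ by $kn$. The delicate part will be carrying out the bookkeeping carefully so that the $k$ witnesses are not double-counted across different $v$ or across different values of $t$; once the charging scheme is set up, the rest of the argument is essentially identical to~\cite{Birnbaum08}.
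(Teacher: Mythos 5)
Your proposal is correct and follows essentially the same route as the paper: reduce to the perfect-matching case via monotonicity, prove the per-rank inequality $1 - x_t \le \frac{1}{kn}\sum_{s\le t} x_s$ by the shifted-permutation argument applied to the $k$ copies of $v$'s matched partner, and solve the resulting recursion. The only cosmetic difference is that the paper averages by picking one of the $k$ copies uniformly at random (so the witness is uniform over all $kn$ vertices of $U_H$), whereas you charge all $k$ copies as distinct witnesses; these are equivalent, and the bookkeeping you worry about is immediate since the inequality is proved separately for each $t$.
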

\begin{proof}
See Appendix~\ref{appendix:ranking}.
\end{proof}

Using this result, we are able to prove that the following algorithm,
called RankingSimulate, achieves a competitive ratio of
$4\sqrt{e}/(\sqrt{e}-1)$.
\begin{center}
\fbox{
\begin{tabular}{l}
{\bf RankingSimulate Algorithm:}\\
\hline
\emph{Initialization:} \\
Set $M$, the set of \emph{matched} bidders, to $\emptyset$. \\
Set $R$, the set of \emph{reserved} bidders, to $\emptyset$. \\
Choose a random permutation (ranking) $\sigma$ of the bidders $V$.\\
\hline
  \emph{Online Matching:} \\
  Upon arrival of keyword $u \in U$: \\  
  \hspace{.2in} Let $N(u)$ be the set of neighbors of 
  $u$ that are not in $M$ or $R$. \\
  \hspace{.2in} If $N(u) = \emptyset$, do nothing. \\
  \hspace{.2in} If $|N(u)| = 1$, let $v$ be the single bidder in $N(u)$. \\
  \hspace{.4in} With probability $1/2$, match $u$ to $v$ and add $v$
  to $M$, and \\
  \hspace{.4in} With probability $1/2$, add $v$ to $R$. \\
  \hspace{.2in} If $|N(u)| \geq 2$, let $v_1$ and $v_2$ be the two
  distinct bidders in $N(u)$ that minimize $\sigma(v)$. \\
  \hspace{.4in} With probability $1/2$, match $u$ to $v_1$, add
  $v_1$ to $M$, and add $v_2$ to $R$, and \\
  \hspace{.4in} With probability $1/2$, match $u$ to $v_2$, add
  $v_1$ to $R$, and add $v_2$ to $M$.
\end{tabular}
}
\end{center}

Let $G = (U_G \cup V, E_G)$ be the bipartite input graph to 2PM, and let
$H = (U_H \cup V, E_H)$ be a left $2$-copy of $H$.  In the arrival order
for $H$, the two copies of each keyword $u_G \in U$ arrive in
sequential order.

\begin{lemma}\label{lem:relateranking}
Fix a ranking $\sigma$ on $V$.  For each bidder $v \in V$, let $X_v$
be the indicator variable for the event that $v$ is matched by Ranking
on $H$, when the ranking is $\sigma$.\footnote{Note that once $\sigma$
  is fixed, $X_v$ is deterministic.}  Let $X'_v$ be the indicator
variable for the event that $v$ is matched by RankingSimulate on $G$,
when the ranking is $\sigma$.  Then $\expect(X'_v) = X_v / 2$.
\end{lemma}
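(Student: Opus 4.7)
The plan is to couple the execution of Ranking on $H$ (under ranking $\sigma$) with the execution of RankingSimulate on $G$ (under the same $\sigma$), so that for every keyword $u \in U_G$ and every realization of RankingSimulate's coin flips, the set $M \cup R$ held by RankingSimulate after processing $u$ coincides with the set of bidders matched by Ranking on $H$ after it has processed both copies of $u$. Once this coupling is in place, the lemma reduces to a one-line observation about the fair coin flip inside RankingSimulate.

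First I would prove the coupling by induction on the keywords of $G$. The base case is trivial. For the inductive step, fix a keyword $u$ and let $u^1, u^2$ be its two copies in $H$, arriving consecutively. By Definition~\ref{def:kcopy} the neighborhood of each $u^i$ in $H$ equals the neighborhood of $u$ in $G$, and by the inductive hypothesis the set of unmatched neighbors of $u^1$ under Ranking on $H$ equals the set $N(u)$ of neighbors of $u$ not in $M \cup R$ under RankingSimulate. Now I would case-split on $|N(u)|$. If $|N(u)| = 0$, neither process changes anything. If $|N(u)| = 1$ with single neighbor $v$, Ranking on $H$ matches $v$ to $u^1$ and $u^2$ finds nothing, while RankingSimulate places $v$ into $M$ or into $R$; both processes add exactly $\{v\}$ to the unavailable set. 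If $|N(u)| \geq 2$, let $v_1, v_2$ be the two lowest-ranked members of $N(u)$ under $\sigma$; Ranking on $H$ matches $v_1$ to $u^1$ and then $v_2$ to $u^2$, while RankingSimulate places $\{v_1, v_2\}$ into $M \cup R$ (one in $M$, one in $R$, decided by a fair coin). In every case the new unavailable sets agree, completing the induction.

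Using the coupling, I would then observe that $v$ is matched by Ranking on $H$ if and only if $v$ is added to $M \cup R$ at some step of RankingSimulate, and that whenever $v$ is added at a step, a fresh fair coin flip at that step decides whether it lands in $M$ or in $R$ (true in both the $|N(u)|=1$ and $|N(u)|\geq 2$ subcases above). Since $v$ is untouched thereafter,
\[
\expect(X'_v) \;=\; \pr(v \in M) \;=\; \tfrac{1}{2}\,\pr(v \in M \cup R) \;=\; \tfrac{1}{2}\, X_v,
\]
where the last equality uses that, with $\sigma$ fixed, the coupling makes the final $M \cup R$ deterministic and equal to the set matched by Ranking on $H$, whose indicator at $v$ is precisely $X_v$.

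The main obstacle is really just the inductive coupling argument, and inside it, the $|N(u)| \geq 2$ step: one must verify that the two lowest-$\sigma$ available neighbors of $u$ in $G$ are exactly the bidders Ranking on $H$ matches to $u^1$ and $u^2$. This follows from the inductive agreement of the available sets together with the fact that Ranking always selects the globally lowest-ranked available neighbor, and the subsequent probability calculation is immediate.
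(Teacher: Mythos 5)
Your proof is correct and follows essentially the same route as the paper's: the paper asserts as ``easy to establish'' exactly the invariant you prove by induction (that the bidders placed in $M \cup R$ by RankingSimulate on $G$ coincide with those matched by Ranking on $H$), and then, like you, concludes from the fact that each bidder enters $M \cup R$ at most once and is matched there with probability $1/2$. Your write-up simply makes the coupling and the case analysis explicit where the paper leaves them to the reader.
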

\begin{proof}
It is easy to establish the invariant that for all $v \in V$, $X_v =
1$ if and only if RankingSimulate puts $v$ in either $M$ or $R$.
Furthermore, each bidder $v \in V$ is put in $M$ or $R$ at most once
by RankingSimulate.  The lemma follows because each time
RankingSimulate adds a bidder $v$ to $M$ or $R$, it matches it with
probability $1/2$.
\end{proof}

\begin{theorem}
The competitive ratio of RankingSimulate is $2\sqrt{e}/(\sqrt{e} - 1)
\approx 5.083$.
\end{theorem}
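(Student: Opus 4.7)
The plan is to combine Lemma~\ref{lem:relateranking} with Theorem~\ref{thm:ranking} applied with $k=2$ to lower-bound $\expect[|M|]$, and then to lower-bound the algorithm's actual expected profit by $\expect[|M|]/2$ via a coin-flip argument that handles the $|N(u)|=1$ case. Let $H$ be the left $2$-copy of $G$ in which the two copies of each $u \in U_G$ arrive consecutively, in the order of $G$'s arrivals. Summing the identity $\expect[X'_v \mid \sigma] = X_v(\sigma)/2$ from Lemma~\ref{lem:relateranking} over $v \in V$ and then taking expectation over $\sigma$ gives
\begin{equation*}
\expect[|M|] \;=\; \tfrac{1}{2}\,\expect\bracks{|\ranking(H,\sigma)|} \;\geq\; OPT_{1P}(G)\paren{1 - 1/\sqrt{e} + o(1)},
\end{equation*}
where the inequality is Theorem~\ref{thm:ranking} with $k=2$ and $OPT_{1P}(G)$ denotes the size of a maximum bipartite matching in $G$.

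Next I would show $\expect[\text{profit}] \geq \expect[|M|]/2$ by analyzing each matched keyword $u$ separately. A match $u \to v$ in $M$ contributes profit $1$ exactly when some other neighbor of $u$ lies in $R$ (and is therefore never matched by the algorithm, so is a legitimate second-price bidder). If $|N(u)| \geq 2$ when $u$ arrives, the bidder reserved during $u$'s processing is itself a neighbor of $u$ that ends up in $R$, so the match deterministically yields profit $1$, equal to $\pr[u \in M]$. If $|N(u)| = 1$, then because $u$ has degree at least $2$, at least one other neighbor $w$ of $u$ is already in $M \cup R$ when $u$ arrives; the $M$-versus-$R$ placement of $w$ was decided by an independent fair coin flip at the time $w$ was processed, and that flip is independent of everything needed to determine the state at $u$'s arrival (in particular, independent of the event $|N(u)|=1$), so $\pr[w \in R] = 1/2$ and hence $\pr[\text{some other neighbor of }u\text{ is in }R] \geq 1/2$. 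Combined with the independent $1/2$ probability that $u$ is actually matched in this case, the expected profit from $u$ is at least $1/4$, which is half of its expected $|M|$ contribution of $1/2$.

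Summing over $u$ and chaining inequalities,
\begin{equation*}
\expect[\text{profit}] \;\geq\; \tfrac{1}{2}\,\expect[|M|] \;\geq\; \tfrac{1}{2}\,OPT_{1P}(G)\paren{1 - 1/\sqrt{e} + o(1)}.
\end{equation*}
Since any 2PM solution induces an ordinary bipartite matching on its first-price bidders, $OPT_{2P}(G) \leq OPT_{1P}(G)$, so the competitive ratio is at most $2/(1-1/\sqrt{e}) + o(1) = 2\sqrt{e}/(\sqrt{e}-1) + o(1)$, as claimed. The main obstacle is cleanly justifying the independence claim in the $|N(u)|=1$ case: one must carefully distinguish the coin flip that determines $w$'s placement in $M$ versus $R$ from the decisions that determined which bidders became members of $M \cup R$ in the first place, and account for the correlation introduced by the fact that in an $|N|\geq 2$ event two bidders are placed in $M$ and $R$ complementarily by a single coin flip.
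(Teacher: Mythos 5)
Your proposal is correct and takes essentially the same route as the paper: Lemma~\ref{lem:relateranking} combined with Theorem~\ref{thm:ranking} at $k=2$ for the matching-size bound, followed by a factor-$2$ loss because a second-price candidate is unmatched with probability at least $1/2$, giving $\tfrac12 OPT_{2P}(1-1/\sqrt{e}+o(1))$. The only difference is bookkeeping: the paper argues per bidder (showing $\expect(P_v \mid v \text{ matched}) \geq 1/2$), whereas you argue per keyword with an explicit $|N(u)|\geq 2$ versus $|N(u)|=1$ split, and the independence issue you flag is resolved exactly as you sketch, since given $\sigma$ the evolution of $M\cup R$ is deterministic (the invariant behind Lemma~\ref{lem:relateranking}), so the state at $u$'s arrival and the identity of $w$ are $\sigma$-measurable while $w$'s $M$-versus-$R$ placement is a fair coin from an earlier, independent event.
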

\begin{proof}
For a permutation $\sigma$ on $V$, let $\rankingsim(\sigma)$ be the
matching of $G$ returned by RankingSimulate, and let
$\ranking(\sigma)$ be the matching of $H$ returned by Ranking.
Lemma~\ref{lem:relateranking} implies that, conditioned on $\sigma$,
$\expect(|\rankingsim(\sigma)|) = |\ranking(\sigma)| / 2$.  
By
Theorem~\ref{thm:ranking},
\begin{equation*}
\expect (|\rankingsim(\sigma)|)
= \frac{1}{2} \expect (|\ranking(\sigma)|)
\geq  OPT_{1P} \paren{1 - 1/e^{1/2} + o(1)} \enspace .
\end{equation*}

Fix a bidder $v \in V$.  Let $P_v$ be the profit from $v$ obtained by
RankingSimulate.  Suppose that $v$ is matched by RankingSimulate to
keyword $u \in U_G$.  Recall that we have assumed without loss of
generality that the degree of $u$ is at least $2$.  Let $v' \ne v$ be
another bidder adjacent to $u$.  Then, given that $v$ is matched to
$u$, the probability that $v'$ is matched to any keyword is no greater
than $1/2$.  Therefore, $\expect(P_v|\mbox{$v$ matched}) \geq 1/2$.
Hence, the expected value of the second-price matching returned by
RankingSimulate is
\begin{eqnarray*}
\sum_{v \in V} \expect(P_v) & = &
\sum_{v \in V} \expect(P_v | \mbox{$v$ matched}) \pr(\mbox{$v$ matched}) \\
& \geq & \frac12 \sum_{v \in V} \pr(\mbox{$v$ matched}) \\
& =    & \frac12 \expect(|\rankingsim(\sigma)|) \\
& \geq & \frac12 OPT_{1P} \paren{1 - 1/e^{1/2} + o(1)} \\
& \geq & \frac12 OPT_{2P} \paren{1 - 1/e^{1/2} + o(1)} \enspace ,
\end{eqnarray*}
where $OPT_{2P}$ is the size of the optimal second-price matching on $G$.
\end{proof}

\section{Conclusion}

In this paper, we have shown that the complexity of the Second-Price
Ad Auctions problem is quite different from that of the more studied
First-Price Ad Auctions problem, and that this discrepancy extends to
the special case of 2PM, whose first-price analogue is bipartite
matching.  On the positive side, we have given a 2-approximation
for offline 2PM and a 5.083-competitive algorithm for online 2PM.

Some open questions remain.  Closing the gap between $2$ and $364/363$
in the approximability of offline 2PM is one clear direction for
future research, as is closing the gap between $2$ and $5.083$ in the
competitive ratio for online 2PM.  Another question we leave open is
whether the analysis for RankingSimulate is tight, though we expect
that it is not.  

%A final direction for future research is to consider
%the multiple-slot version of 2PAA and 2PM.

\pagebreak
\bibliographystyle{plain}
\bibliography{thinking_twice}

\appendix
\section{Discussion of Related Models}\label{appendix:models}
In this section, we prove the statements claimed in the introduction
regarding the relationships between the optimal solutions of our model
and the strict and non-strict models of Goel et al.~\cite{Goel08a}.
Given an instance $A$, let $OPT_{2P}$ be the optimal solution value
in our model; let $OPT_{strict}$ be the optimal solution value under
the strict model; and let $OPT_{non-strict}$ be the optimal solution
value under the non-strict model.  We prove the following theorem.
\begin{theorem}\label{thm:relationship}
For any instance $A$, $OPT_{non-strict} \leq (2+1/R_{max})OPT_{strict}
\leq 8(2+1/R_{max}) OPT_{2P}$.
\end{theorem}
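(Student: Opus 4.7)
The statement chains two inequalities that I would prove separately:
(i) $OPT_{non-strict} \leq (2+1/R_{max})\, OPT_{strict}$, and
(ii) $OPT_{strict} \leq 8\, OPT_{2P}$.
Multiplying them yields the chain in the theorem.

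For (i), the plan is to fix an optimal non-strict allocation and partition its revenue according to the status of each keyword's second-price bidder at the time of its auction. Call an auction \emph{live} if its second-price bidder still has strictly positive remaining budget at that moment, and \emph{phantom} otherwise. Live auctions can be replicated verbatim in the strict model with identical payment, so their aggregate revenue is at most $OPT_{strict}$. For phantom auctions, where a depleted bidder $v$ is setting the price ``for free'' in the non-strict sense, I would use a charging argument: each phantom event is charged back to an earlier win of the same bidder $v$, and since $v$'s lifetime payments are capped at $B_v$ while each phantom event extracts at most $b_{u,v} \leq B_v / R_{max}$ from the winner, the total phantom revenue is bounded by $(1/R_{max})$ times the revenue of the corresponding strict wins. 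Summing the live and phantom contributions gives the $(2 + 1/R_{max})$ factor.

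For (ii), the plan is to take an optimal strict solution $S^*$ and construct a (possibly randomized) 2PAA solution achieving expected revenue at least $OPT_{strict}/8$. The key observation is that, when $S^*$ is replayed in the 2PAA model, the per-keyword payment drops below the strict payment only when the second-price bidder's input bid exceeds its \emph{2PAA} remaining budget -- call these \emph{overbid} events. I would split the keywords of $S^*$ into overbid and non-overbid buckets. Non-overbid keywords contribute equal revenue in 2PAA, so only the overbid bucket needs handling. For overbid events an accounting argument gives the bound: if $v$ ever acts as an overbidding second-price bidder, then $v$'s earlier wins in $S^*$ must already have paid out enough to deplete its budget $B_v$, so the overbid second-price revenue enabled by $v$ is a bounded multiple of its own strict winnings. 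To decouple the two models' budget trajectories I would subsample $S^*$ randomly -- independently retain each keyword with probability $1/2$ and, for retained keywords, flip the first-price/second-price roles at random -- and a linearity-of-expectation argument then preserves at least an $1/4$ fraction of the strict revenue. Combined with the factor $2$ lost to the overbid/non-overbid split, this produces the constant $8$.

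The main obstacle will be the coupling between the strict and 2PAA budget dynamics. Because the per-keyword payments differ, the remaining budgets of bidders drift apart even under identical allocations, so naive replay of $S^*$ in 2PAA can violate budget feasibility or collapse phantom second-price bidders that $S^*$ implicitly relied on. Carefully designing the subsampling distribution and the charging map so that feasibility, the constant factor, and the overbid accounting all hold simultaneously is where the technical effort concentrates, and I expect the factor $8$ to be loose -- a product of crude union bounds over the coin flips and the partition rather than a tight structural bound.
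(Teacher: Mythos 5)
Your plan diverges from the paper's proof and, as written, contains genuine gaps. First, the paper does not prove the inequality $OPT_{non-strict} \leq (2+1/R_{max})\,OPT_{strict}$ at all: it is imported directly from Goel et al.~\cite{Goel08a}. Your attempted charging argument for it does not work as stated: the phantom revenue is paid by the \emph{winners} of those auctions, not by the depleted price-setter $v$, and in the non-strict model a depleted bidder can set prices on arbitrarily many later keywords, so charging each phantom event against $v$'s earlier wins (which total at most $B_v$) does not bound the phantom total; moreover the accounting you describe would yield $1+1/R_{max}$, not $2+1/R_{max}$. Second, and more importantly, for $OPT_{strict} \leq 8\,OPT_{2P}$ your key claim --- that the ``overbid'' second-price revenue enabled by a bidder $v$ is a bounded multiple of $v$'s own strict winnings --- is false: in the strict model a bidder whose budget is nearly but not fully depleted can set a high price on unboundedly many keywords won by others (this is exactly the phenomenon exploited in Figure~\ref{fig:higher_revenue}), so that revenue bears no relation to $v$'s own spending. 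The budget-trajectory coupling you flag in your final paragraph is precisely the crux of the statement, and your proposal leaves it unresolved, so what you have is a plan rather than a proof.

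The paper resolves the coupling by a different route. It re-prices each keyword through an auxiliary \emph{first-price} instance $A'$ in which $b'_{u,v}$ is the largest bid of another bidder not exceeding $b_{u,v}$, and shows $OPT_{non-strict}(A) \leq OPT_{1P}(A')$ (hence also $OPT_{strict}(A) \leq OPT_{1P}(A')$). It then converts an optimal first-price allocation of $A'$ into a feasible second-price allocation of $A$ by marking each bidder independently with probability $1/2$: marked bidders are used \emph{only} as second-price bidders, so they never spend and their bids are never reduced --- this is what eliminates the drift between budget trajectories that worries you --- while each unmarked bidder $v$ keeps only the keywords whose price-setters are marked, taking either all of them (if within budget) or the better of the prefix and the last keyword, which guarantees feasibility at the original prices while losing at most another factor of $2$. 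The three factors of $2$ give exactly the constant $8$. To salvage your replay-of-$S^*$ approach you would need an analogue of this ``price-setters never win'' device; random subsampling and role-flipping on the same keyword set does not by itself prevent a price-setter from having already spent its budget in the 2PAA run.
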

The first inequality follows from the work of Goel et~al.~\cite{Goel08a},
so we need only prove that $OPT_{strict} \leq 8 OPT_{2P}$.

The core of our argument is a reduction from 2PAA to the First-Price
Ad Auctions problem (1PAA),\footnote{Recall that this problem has also
  been called the {\em Adwords} problem~\cite{Mehta07} and the {\em
    Maximum Budgeted Allocation}
  problem~\cite{Azar08,Chakrabarty08,Srinivasan08}.} in which only one
bidder is chosen for each keyword and that bidder pays the minimum of
its bid and its remaining budget.  Given an instance $A$ of 2PAA, we
construct an instance $A'$ of 1PAA problem by replacing each bid
$b_{u,v}$ by
\begin{equation*}
  b'_{u,v} \triangleq \max_{v' \ne v~:~ b_{u,v'} \leq b_{u,v}} b_{u,v'}
  \enspace .
\end{equation*}
Denote by $OPT_{1P}(A')$ the optimal value of the first-price model on
$A'$. The following two lemmas prove Theorem~\ref{thm:relationship} by
relating both $OPT_{non-strict}(A)$ and $OPT_{2P}(A)$ to
$OPT_{1P}(A')$.

\begin{lemma}
 $OPT_{non-strict}(A) \leq OPT_{1P}(A')$.
\end{lemma}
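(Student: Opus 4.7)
The plan is to turn any feasible non-strict allocation on $A$ into a feasible 1PAA allocation on $A'$ of at least the same value. Fix an optimal non-strict solution. For every keyword $u$ that is allocated, let $v_u$ denote the winning bidder and $p_u$ the charged price. By the GSP rule, $p_u$ equals the bid $b_{u,v'}$ of some other participating bidder $v' \ne v_u$ with $b_{u,v'} \le b_{u,v_u}$ (possibly further truncated by $v_u$'s remaining budget, but that only makes $p_u$ smaller). The definition
\begin{equation*}
b'_{u,v_u} = \max_{v' \ne v_u~:~b_{u,v'} \le b_{u,v_u}} b_{u,v'}
\end{equation*}
then yields $p_u \le b'_{u,v_u}$.

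Next, build a 1PAA allocation on $A'$ by routing each keyword $u$ to $v_u$, and bound the contribution of each bidder $v \in V$ separately. In the non-strict solution, the revenue collected from $v$ is $\sum_{u : v_u = v} p_u$; since each $p_u$ is capped by $v$'s remaining budget at the moment of charging, this sum cannot exceed $B_v$, and by the per-keyword bound above it also cannot exceed $\sum_{u : v_u = v} b'_{u,v}$. On the other hand, the 1PAA revenue from $v$ on $A'$ under this allocation is exactly $\min\paren{\sum_{u : v_u = v} b'_{u,v},\, B_v}$, since each individual 1PAA payment is the bid truncated by the remaining budget. Hence the 1PAA revenue from $v$ dominates the non-strict revenue from $v$. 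Summing over $v \in V$ gives $OPT_{non-strict}(A) \le OPT_{1P}(A')$.

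The main conceptual obstacle is that the reduced bids $b'_{u,v}$ can be substantially smaller than the original bids $b_{u,v}$, so one might worry the reduction is too lossy. The observation that rescues the argument is that we need only dominate the price the winner actually pays in the non-strict model, not the winner's own bid, and $b'_{u,v_u}$ is engineered to be precisely an upper bound on the second-highest bid available to be charged. Coupling this per-keyword bound with the per-bidder budget cap that the non-strict model enforces through its running-budget rule is exactly what makes the $\min$-versus-sum comparison on each $v$ go through.
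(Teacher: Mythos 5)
Your proposal is correct and follows essentially the same route as the paper: you route each allocated keyword to its non-strict winner (the paper's $f_1$), use the definition of $b'_{u,v}$ to bound the charged price by $b'_{u,v_u}$, and compare per-bidder profits via the $\min$ with the budget $B_v$. The only difference is notational (you write $v_u$, $p_u$ where the paper uses the pair of partial functions $f_1$, $f_2$), so nothing further is needed.
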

\begin{proof}
  For an instance $A$, we can view a non-strict second-price
  allocation of $A$ as a pair of (partial) functions $f_1$ and $f_2$
  from the keywords $U$ to the bidders $V$, where $f_1$ maps each
  keyword to the bidder to which it is allocated and $f_2$ maps each
  keyword to the bidder acting as its second-price bidder.  Thus, if
  $f_1(u) = v$ and $f_2(u) = v'$ then $u$ is allocated to $v$ and the
  price $v$ pays is $b_{u,v'}$.  We have, for all such $u$, $v$, and
  $v'$, that $b_{u,v'} \leq b'_{u,v}$.

  We construct the first-price allocation on $A'$ defined by $f_1$ and
  claim that the value of this first-price allocation is at least the
  value of the non-strict allocation defined by $f_1$ and $f_2$. It
  suffices to show that for any bidder $v$, the profit that the
  non-strict allocation gets from $v$ is at most the profit that the
  first-price allocation gets from $v$, or in other words,
  $$ 
  \min\left(B_v, \sum_{u: f_1(u) = v}
  b_{u,f_2(u)}\right) \leq \min\left(B_v, \sum_{u: f_1(u)
    = v} b'_{u,v}\right) \enspace .
  $$ 
  This inequality follows trivially from the fact that
  $b_{u,f_2(u)} \leq b'_{u,f_1(u)}$ for all allocated keywords $u$,
  and hence the lemma follows.
\end{proof}

\begin{lemma}
  $OPT_{1P}(A') \leq 8OPT_{2P}(A)$.
\end{lemma}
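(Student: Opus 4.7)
The plan is to take an optimal 1PAA allocation $f\colon U \to V$ on $A'$, of value $OPT_{1P}(A')$, and convert it (in expectation) into a 2PAA allocation on $A$ of value at least $OPT_{1P}(A')/8$. For each keyword $u$ with $f(u) = v$, let $s(u)$ denote any bidder $v' \neq v$ attaining the maximum in the definition of $b'_{u,v}$, so that $b_{u, s(u)} = b'_{u,v}$. I would proceed in two stages: first \emph{truncate} $f$ so that each bidder's total spend fits within its budget, and then apply a \emph{random partition} of the bidders into first-price and second-price roles.

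For the truncation, let $H_v = f^{-1}(v)$ and $T_v = \sum_{u \in H_v} b'_{u,v}$. If $T_v \leq B_v$, set $H'_v = H_v$; otherwise choose $H'_v \subseteq H_v$ maximizing $\sum_{u \in H'_v} b'_{u,v}$ subject to being at most $B_v$. Since each $b'_{u,v} \leq b_{u,v} \leq B_v$, a standard greedy subset-sum argument (add prices sorted in decreasing order while they fit) gives $T'_v := \sum_{u \in H'_v} b'_{u,v} \geq B_v/2$ in this case. In either case $T'_v \geq \min(B_v, T_v)/2$ and $T'_v \leq B_v$, so $\sum_v T'_v \geq OPT_{1P}(A')/2$.

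For the partition, I would independently place each bidder in $V$ into a \emph{first-price class} $F$ or \emph{second-price class} $S$ with probability $1/2$ each. Call $u$ \emph{active} when $u \in H'_{f(u)}$, $f(u) \in F$, and $s(u) \in S$, and construct the 2PAA allocation on $A$ by matching each active $u$ with first-price bidder $f(u)$ and second-price bidder $s(u)$. Feasibility holds: bidders in $S$ never pay, so their budgets remain untouched and $b_{u, s(u)}(t-1) = b'_{u,v}$; the total spending by $v \in F$ across active keywords in $H'_v$ is bounded by $T'_v \leq B_v$, so at every step $v$'s remaining budget is at least $b'_{u,v}$ and $v$ is the valid first-price bidder. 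I expect the main obstacle to be verifying this feasibility carefully---in particular, noting that $s(u) \neq f(u)$ so the $F/S$ assignments of the two bidders forming each pair are independent random variables.

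The expected revenue from bidder $v$ is
\[
\pr(v \in F) \cdot \sum_{u \in H'_v} b'_{u,v} \cdot \pr(s(u) \in S) \;=\; \frac{1}{2} \cdot T'_v \cdot \frac{1}{2} \;=\; \frac{T'_v}{4},
\]
using that $\{s(u) \in S\}$ and $\{v \in F\}$ are independent because $s(u) \neq v$. Summing over $v$ gives expected 2PAA revenue at least $\frac{1}{4} \sum_v T'_v \geq \frac{1}{8}\,OPT_{1P}(A')$; by averaging, some realization of $(F, S)$ yields a 2PAA allocation on $A$ of value at least $OPT_{1P}(A')/8$, proving the lemma.
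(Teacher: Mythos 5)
Your proof is correct and follows essentially the same route as the paper's: a uniform random split of the bidders into first-price and second-price roles (so each keyword's pair survives with probability $1/4$), with second-price bidders' budgets left untouched so that each allocated keyword yields exactly $b'_{u,v}$, plus a factor-$2$ loss to budget feasibility, giving $8$ overall. The only cosmetic difference is that you enforce budget feasibility by a deterministic greedy pre-truncation of each bidder's keyword set, whereas the paper assumes the first-price allocation never overshoots before its last keyword and then trims within each bidder's random surviving set (keeping either all but the last keyword or only the last), but both arguments lose the same factor of $2$.
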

\begin{proof}
  Given an optimal first-price allocation of $A'$, we can assume
  without loss of generality that each bidder's budget can only be
  exhausted by the last keyword allocated to it, or, more formally, if
  $u_1, u_2, \ldots u_k$ are the keywords that are allocated to a
  bidder $v$ and they come in that order, then we can assume that
  $\sum_{i=1}^{k-1} b'_{u_i,v} < B_v$.  The reason for this is that if
  for some $j < k$, $\sum_{i=1}^{j-1} b'_{u_i,v} < B_v$ and
  $\sum_{i=1}^j b'_{u_i,v} \geq B_v$, then we can ignore the allocation
  of $u_{j+1}, \ldots u_k$ to $v$ without losing any profit.
 
  With this assumption, we design a randomized algorithm that
  constructs a second-price allocation on $A$ whose expected value in
  our model is at least $1/8$ of the first-price allocation's value.
  Viewing the first-price allocation of $A'$ as a (partial) function
  $f$ from the keywords $U$ to the bidders $V$ and denoting by
  $s(u,v)$ the bidder $v'$ for which $b_{v'u} = b'_{vu}$, the
  algorithm is as follows.

\begin{center}
\fbox{
\begin{tabular}{l}
{\bf Random Construction:}\\
\hline
  Randomly mark each bidder with probability $1/2$. \\
  For each unmarked bidder $v$: \\
    \hspace{.2in} Let $S_v = \emptyset$.\\
    \hspace{.2in} For each keyword $u$ such that $f(u) = v$:\\
      \hspace{.4in} If $s(u,v)$ is marked: $S_v = S_v \cup \{u\}$.\\
    \hspace{.2in} Assume that $S_v = \{u_1, u_2, \ldots u_k\}$, where $u_1, u_2, \ldots u_k$ come in that order.\\
      \hspace{.4in} If $\sum_{i=1}^k b'_{u_i,v} \leq B_v$:\\
        \hspace{.6in} Let $f_1(u_i) = v$ and $f_2(u_i) = s(u_i,v)$ for all $i \leq k$.\\
      \hspace{.4in} Else: \\
        \hspace{.6in} If $\sum_{i=1}^{k-1} b'_{u_i,v} \geq b'_{u_k,v}$: 
          let $f_1(u_i) = v$ and $f_2(u_i) = s(u_i,v)$ for all $i \leq k-1$. \\
        \hspace{.6in} Else: let $f_1(u_k) = v$ and $f_2(u_k) = s(u_k, v)$.\\
\end{tabular}
}
\end{center}

We claim that for the $f_1$ and $f_2$ defined by this construction,
whenever $f(u_i)$ is set to $v$, the profit from that allocation is
$b'_{u_i,v}$.  This is not trivial because in our model, if a bidder's
remaining budget is smaller than its bid for a keyword, it changes its
bid for that keyword to its remaining budget. However, one can easily
verify that in all cases, if we set $f_1(u_i) = v$ and $f_2(u_i) =
s(u_i, v)$, the remaining budget of $v$ is at least $b'_{u_i,v} =
b_{u_i,s(u_i,v)}$. Thus, the (modified) bid of $f_1(u_i)$ for $u_i$ is
still at least the original bid of $f_2(u_i)$ for $u_i$.

We claim that the expected value of the second-price allocation
defined by $f_1$ and $f_2$ is at least $1/8 OPT_{1P}(A')$. For each
bidder $v$, let $X_v$ be the random variable denoting the profit that
$f_1$ and $f_2$ get from $v$, and let $Y_v$ be the profit that $f$
gets from $v$. We have $OPT_{1P}(A') = \sum_v Y_v$, so it suffices to
show that $E(X_v) \geq 1/8Y_v$ for all $v \in V$.

Consider any $v \in V$ that is unmarked. Let $T_v = \{u: f(u) =
v\}$. If $\sum_{u \in S_v} b'_{u,v} \leq B_v$ then $X_v = \sum_{u \in
  S_v} b'_{u,v}$. If $\sum_{u \in S_v} b'_{u,v} > B_v$ then $X_v \geq
\sum_{u \in S_v} b'_{u,v}/2$. Thus, in both case, we have
$$
  E[X_v|v\ \textrm{is unmarked}] \geq E[\sum_{u \in S_v} b'_{u,v}/2 | v\ \textrm{is unmarked}] = \sum_{u \in T_v} b'_{u,v}/4 = Y_v/4 \enspace ,
$$
which implies
$$
  E[X_v] \geq E[X_v|v\ \textrm{is unmarked}]Pr[v\ \textrm{is unmarked}] = 1/2\cdot Y_v/4 = Y_v/8 \enspace .
$$
\end{proof}

\section{Proof of Theorem~\ref{thm:ranking}}\label{appendix:ranking}
In this appendix, we provide a full proof of
Theorem~\ref{thm:ranking}.  The proof presented here is quite similar
to the simplified proof of Theorem~\ref{thm:old_ranking} presented by
Birnbaum and Mathieu~\cite{Birnbaum08}.  For intuition into the proof
presented here, the interested reader is referred to that
work.\footnote{For those familiar with the proof in~\cite{Birnbaum08},
  the main difference between the proof of Theorem~\ref{thm:ranking}
  presented here and the proof of Theorem~\ref{thm:old_ranking}
  presented in \cite{Birnbaum08} appears in
  Lemma~\ref{lem:rankingmain}.  Instead of letting $u$ be the single
  vertex that is matched to $v$ by the perfect matching, as is done in
  \cite{Birnbaum08}, we choose $u$ uniformly at random from one of the
  $k$ vertices that correspond to the vertex that is matched to $v$ by
  the perfect matching.  The rest of the proof is essentially the
  same, but we present its entirety here for completeness.}

Let $G = (U_G \cup V, E_G)$ be a bipartite graph and let $H = (U_H
\cup V, E_H)$ be a left $k$-copy of $G$. Let $\zeta : U_H \rightarrow
U_G$ be a map that satisfies the conditions of
Definition~\ref{def:kcopy}.  Let $M_G \subseteq E_G$ be a maximum
matching of $G$.

Let $\ranking(H, \pi, \sigma)$ denote the matching constructed on $H$
for arrival order $\pi$, when the ranking is $\sigma$.  Consider
another process in which the vertices in $V$ arrive in the order given
by $\sigma$ and are matched to the available vertex $u \in U_H$ that
minimizes $\pi(u)$.  Call the matching constructed by this process
$\ranking'(H, \pi, \sigma)$.  
It is not hard to see that these matchings are identical, a fact
that is proved in \cite{Karp90}.
\begin{lemma}[Karp, Vazirani, and Vazirani \cite{Karp90}]\label{lem:duality}
For any permutations $\pi$ and $\sigma$, $\ranking(H, \pi, \sigma) =
\ranking'(H, \pi,\sigma)$.\end{lemma}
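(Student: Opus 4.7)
The plan is to prove that $\ranking(H,\pi,\sigma) = \ranking'(H,\pi,\sigma)$ by induction on $|V|$, peeling off at each step the bidder $v^\ast \in V$ of smallest $\sigma$-rank (i.e., the first bidder processed by $\ranking'$). The key observation driving the induction is that both processes must in fact agree on the edge (or non-edge) incident to $v^\ast$: namely, if $v^\ast$ has any neighbor at all, then the edge $(u^\ast, v^\ast)$ with $u^\ast = \arg\min_{u \in N(v^\ast)} \pi(u)$ lies in both matchings. Once this is shown, deleting $u^\ast$ and $v^\ast$ reduces to a smaller instance on which the induction hypothesis applies.

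For the inductive step, I would first dispose of the trivial subcase where $v^\ast$ has no neighbors in $U_H$: in that event $v^\ast$ is unmatched in both processes, and removing it preserves every other choice verbatim, so induction yields the claim. In the main subcase, where $u^\ast$ exists, $\ranking'$ immediately matches $v^\ast$ to $u^\ast$ because all of $U_H$ is still available when $v^\ast$ arrives and $u^\ast$ minimizes $\pi$ among $N(v^\ast)$. The nontrivial half is to show that $\ranking$ also matches $u^\ast$ to $v^\ast$. The crucial fact is that every keyword $u'$ arriving before $u^\ast$ in $\pi$ must be a non-neighbor of $v^\ast$: otherwise, $\pi(u') < \pi(u^\ast)$ with $u' \in N(v^\ast)$ would contradict the minimality of $u^\ast$. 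Hence $v^\ast$ is still unmatched at the moment $u^\ast$ arrives, and since $v^\ast$ has the globally smallest $\sigma$-rank, it is chosen by $u^\ast$.

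Having matched $u^\ast$ to $v^\ast$ in both processes, I would finish by deleting $u^\ast$ and $v^\ast$ and invoking the induction hypothesis on the residual graph with the restricted orderings $\pi|_{U_H \setminus \{u^\ast\}}$ and $\sigma|_{V \setminus \{v^\ast\}}$. To make this rigorous one must check that the executions of $\ranking$ and $\ranking'$ on $H$ (after accounting for the edge $(u^\ast, v^\ast)$) coincide with their executions on the residual graph. For $\ranking$, this follows because (i) no keyword before $u^\ast$ in $\pi$ is adjacent to $v^\ast$, so the presence of $v^\ast$ as an available vertex during that prefix is irrelevant, and (ii) after $u^\ast$ is processed, $v^\ast$ is matched and therefore invisible thereafter, mirroring its absence in the residual graph. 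For $\ranking'$, $v^\ast$ is processed first and then $u^\ast$ is matched, so all later bidders see exactly the same set of available keywords in both instances.

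The main obstacle is precisely this last bookkeeping step: one has to be confident that removing the pair $(u^\ast, v^\ast)$ leaves the rest of both executions undisturbed. The argument is conceptually simple but relies entirely on the minimality property of $u^\ast$; without it, $v^\ast$ could have been matched earlier in $\ranking$ to some other keyword, breaking the coupling between the two processes. Once the $u^\ast$-minimality is exploited as above, the reduction is clean and the induction closes.
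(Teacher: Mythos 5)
Your proof is correct. The paper itself gives no argument for this lemma---it simply defers to Karp, Vazirani, and Vazirani \cite{Karp90}---and your induction (peel off the bidder $v^\ast$ of minimum $\sigma$-rank together with its $\pi$-minimal neighbor $u^\ast$, verify that both processes place the edge $(u^\ast,v^\ast)$ in the matching, and check that deleting the pair leaves both executions on the residual graph unchanged) is the standard, and sound, way to establish it; in particular the key observation that no keyword preceding $u^\ast$ in $\pi$ can be adjacent to $v^\ast$ is exactly what makes both the matching of $(u^\ast,v^\ast)$ and the bookkeeping for the residual instance go through.
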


The following monotonicity lemma shows that removing vertices in $H$
can only decrease the size of the matching returned by Ranking.

\begin{lemma}\label{lem:monotonicity}
Let $\pi_H$ be an arrival order for the vertices in $U_H$, and let
$\sigma_H$ be a ranking on the vertices in $V$.  Suppose that $x$ is a
vertex in $U_H \cup V$, and let $H' = (U_{H'}, V_{H'}, E_{H'}) = H
\setminus \braces{x}$.  Let $\pi_{H'}$ and $\sigma_{H'}$ be the
orderings of $U_{H'}$ and $V_{H'}$ induced by $\pi_H$ and $\sigma_H$,
respectively.  Then $\ranking(H', \pi_{H'}, \sigma_{H'}) \leq
\ranking(H, \pi_H, \sigma_H)$.
\end{lemma}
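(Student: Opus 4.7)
The plan is to couple the two $\ranking$ processes on $H$ and $H'$ and maintain an invariant that controls how their matchings can differ. I would split into two cases based on whether $x \in U_H$ or $x \in V$, proving the first directly and reducing the second to it via Lemma~\ref{lem:duality}.

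In the case $x \in U_H$, I would walk through the arrival order $\pi_H$ step by step, processing the next keyword in both $\ranking(H,\pi_H,\sigma_H)$ and $\ranking(H',\pi_{H'},\sigma_{H'})$, with the understanding that when $x$ itself arrives only the $H$-process takes an action. The invariant I would maintain, by induction on the step, is that the running pair of matchings is in one of two states: state (A), in which the two processes have matched the same set of bidders and produced matchings of equal size, or state (B), in which there is a single ``floating'' bidder $v^* \in V$ matched in the $H$-process but not in the $H'$-process, with the $H$-matching exactly one larger than the $H'$-matching. Initially the empty pair is in state (A); processing $x$ either keeps the pair in (A) (if $x$ is unmatched in $H$) or promotes it to (B) with $v^*$ the partner of $x$.

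The induction step for a subsequent keyword $y$ is a short case analysis relative to state (B). Because $V$ is the same in $H$ and $H'$, the available neighborhoods of $y$ in the two processes differ only by the presence of $v^*$ on the $H'$-side. If $v^*$ is not a neighbor of $y$, or if $v^*$ has worse $\sigma$-rank than some available $H$-neighbor of $y$, both processes make the same choice and the state is preserved. If $v^*$ is a neighbor of $y$ and the $H$-process has no other available neighbor, then the $H'$-process matches $y$ to $v^*$ while the $H$-process does nothing, and the state collapses to (A) with the matching sizes equalizing. If $v^*$ is a neighbor of $y$ with better rank than the best available $H$-neighbor $v_0$, then the $H'$-process matches $y$ to $v^*$ while the $H$-process matches $y$ to $v_0$, and the floating bidder transfers from $v^*$ to $v_0$, keeping us in state (B). In every case the invariant holds, so in particular $|\ranking(H',\pi_{H'},\sigma_{H'})| \le |\ranking(H,\pi_H,\sigma_H)|$.

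For the case $x \in V$, I would invoke Lemma~\ref{lem:duality}, which gives $\ranking(H,\pi_H,\sigma_H) = \ranking'(H,\pi_H,\sigma_H)$ and likewise for $H'$. In the $\ranking'$ view the elements of $V$ play the role of the arriving sequence and the elements of $U_H$ play the role of the ranked side, so deleting $x \in V$ from $H$ is precisely an arriving-side deletion of the type just analyzed. The same coupling argument, with the two sides swapped, yields the inequality. The main obstacle is the case analysis in state (B) of the keyword case: the sub-cases must be enumerated carefully to verify that in every one the invariant either persists or closes out cleanly to state (A), never allowing the $H'$-matching to overtake the $H$-matching.
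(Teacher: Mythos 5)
Your proposal is correct and follows essentially the same route as the paper: both handle $x \in U_H$ by an induction over the arrival steps that compares the two executions of Ranking, and both reduce the $x \in V$ case to the keyword case via Lemma~\ref{lem:duality}. The only difference is bookkeeping --- you maintain the sharper coupling invariant that the matched-bidder sets differ by at most one ``floating'' bidder (so the matching sizes differ by at most one), whereas the paper maintains the time-shifted containment $Q_{t-1}(H') \subseteq Q_t(H)$; both invariants are preserved by the same rank-comparison case analysis and immediately give the desired inequality.
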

\begin{proof}
Suppose first that $x \in U_H$.  In this case, $V = V_{H'}$ and
$\sigma_{H} = \sigma_{H'}$.  Let $Q_t(H) \subseteq V$ be the set of
vertices matched to vertices in $U_H$ that arrive at or before time
$t$ (under arrival order $\pi_H$ and ranking $\sigma_H$), and let
$Q_t(H') \subseteq V$ be the set of vertices matched to vertices in
$U_{H'}$ that arrive at or before time $t$ (under arrival order
$\pi_{H'}$ and ranking $\sigma_H$).  We prove by induction on $t$ that
$Q_{t-1}(H') \subseteq Q_{t}(H)$, which by substituting $t = n$ is
sufficient to prove the claim.  The statement holds when $t = 1$,
since $Q_0(H') = \emptyset$.  Now supposing we have $Q_{t-2}(H')
\subseteq Q_{t-1}(H)$, we prove $Q_{t-1}(H') \subseteq Q_t(H)$.
Suppose that $t$ is at or before the time that $x$ arrives in $\pi_H$.
Then clearly $Q_{t-1}(H') = Q_{t-1}(H) \subseteq Q_{t}(H)$.  Now
suppose that $t$ is after the time that $x$ arrives in $\pi_H$.  Let
$u$ be the vertex that arrives at time $t-1$ in $\pi_{H'}$.  If $u$ is
not matched by $\ranking(H',\pi_{H'},\sigma_H)$, then $Q_{t-1}(H') =
Q_{t-2}(H') \subseteq Q_{t-1}(H) \subseteq Q_{t}(H)$.  Now suppose
that $u$ is matched by $\ranking(H', \pi_{H'}, \sigma_{H})$, say to
vertex $v'$. We show that $v' \in Q_t(H)$, which by the induction
hypothesis, is enough to prove that $Q_{t-1}(H') \subseteq Q_{t}(H)$.
Note that $u$ arrives at time $t$ in $\pi_H$.  Let $v$ be the vertex
to which $u$ is matched by $\ranking(H, \pi_H, \sigma_H)$.  If $v =
v'$, we are done, so suppose that $v \not= v'$.  Since $v \not\in
Q_{t-1}(H)$, it follows by the induction hypothesis that $v \not\in
Q_{t-2}(H')$.  Therefore, vertex $v$ is available to be matched to $u$
when it arrives in $\pi_{H'}$.  Since $\ranking(H', \pi_{H'},
\sigma_{H})$ matched $u$ to $v'$ instead, $v'$ must have a lower rank
than $v$ in $\sigma_H$.  Since $\ranking(H, \pi_H, \sigma_H)$ chose
$v$, vertex $v'$ must have already been matched when vertex $u$
arrived at time $t$ in $\pi_H$, or, in other words, $v' \in Q_{t-1}(H)
\subseteq Q_t(H)$.

Now suppose that $x \in V$.  In this case, $U_{H} = U_{H'}$ and $\pi_H
= \pi_{H'}$.  Let $R_t(H) \subseteq U_H$ be the set of vertices
matched to vertices in $V$ that are ranked less than or equal to $t$
(under arrival order $\pi_H$ and ranking $\sigma_H$), and let $R_t(H')
\subseteq U_H$ be the set of vertices matched to vertices in $V$ that
are ranked less than or equal to $t$ (under arrival order $\pi_{H}$
and ranking $\sigma_{H'}$).  Then by Lemma~\ref{lem:duality}, we can
apply the same argument as before to show that $R_{t-1}(H') \subseteq
R_t(H)$ for all $t$, which by substituting $t = n$, is sufficient to
prove the claim.
\end{proof}

We define the following notation.  For all $u_G \in U_G$, let
$\zeta^{-1}(u_G)$ be the set of all $u_H \in U_H$ such that
$\zeta(u_H) = u_G$, and for any subset $U_G' \subseteq U_G$, let
$\zeta^{-1}(U_G')$ be the set of all $u_H \in U_H$ such that
$\zeta(u_H) \in U_G'$.  The following lemma shows that we can assume
without loss of generality that $M_G$ is a perfect matching.

\begin{lemma}\label{lem:perfectok}
Let $U' \subseteq U_G$ and $V' \subseteq V$ be the subset of vertices
that are in $M_G$.  Let $G'$ be the subgraph of $G$ induced by $U'
\cup V'$, and let $H'$ be the subgraph of $H$ induced by
$\zeta^{-1}(U') \cup V'$.  Then the expected size of the matching
produced by Ranking on $H'$ is no greater than the expected size of
the matching produced by Ranking on $H$.
\end{lemma}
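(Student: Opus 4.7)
The plan is to deduce this lemma directly from the monotonicity property established in Lemma~\ref{lem:monotonicity}, combined with a coupling of the random rankings on $V$ and on $V'$. Intuitively, $H'$ is obtained from $H$ by deleting vertices, and monotonicity says that such deletions can only hurt Ranking; the only twist is that the \emph{random} rankings on $V$ and on $V'$ need to be coupled correctly so that the expectation inequality goes through pointwise.

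Fix an arbitrary arrival order $\pi_H$ on $U_H$ (the adversarial order implicit in the statement) and let $\pi_{H'}$ be the arrival order on $\zeta^{-1}(U')$ that it induces. I would sample $\sigma_H$ uniformly at random from the permutations of $V$ and let $\sigma_{H'}$ denote the ranking on $V'$ that $\sigma_H$ induces. A standard exchangeability observation shows that $\sigma_{H'}$ is uniformly distributed over permutations of $V'$, so under this coupling
\begin{equation*}
\expect_{\sigma_{H'}}\bigl[|\ranking(H',\pi_{H'},\sigma_{H'})|\bigr]
=
\expect_{\sigma_H}\bigl[|\ranking(H',\pi_{H'},\sigma_{H'})|\bigr],
\end{equation*}
which is the quantity we want to compare against $\expect_{\sigma_H}[|\ranking(H,\pi_H,\sigma_H)|]$.

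Next I would obtain $H'$ from $H$ by deleting the vertices of $(U_H \setminus \zeta^{-1}(U')) \cup (V \setminus V')$ one at a time in an arbitrary order. At each step, the arrival order and ranking on the remaining graph are precisely the ones induced by $\pi_H$ and $\sigma_H$, so Lemma~\ref{lem:monotonicity} applies directly and shows that the size of the matching returned by Ranking can only decrease. Iterating over all deletions gives the pointwise bound $|\ranking(H',\pi_{H'},\sigma_{H'})| \leq |\ranking(H,\pi_H,\sigma_H)|$ for every realization of $\sigma_H$. Taking expectations over $\sigma_H$ then yields the lemma.

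Honestly, there is no hard step here: Lemma~\ref{lem:monotonicity} does essentially all the work, and the only thing one has to be careful about is the coupling itself, namely that the restriction of a uniformly random permutation of $V$ to the subset $V'$ is again uniformly random. Everything else is just iterating a deterministic inequality and taking expectations.
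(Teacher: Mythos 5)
Your proof is correct and follows exactly the paper's route: the paper proves this lemma by repeated application of Lemma~\ref{lem:monotonicity} to delete every vertex outside $\zeta^{-1}(U') \cup V'$, which is precisely your argument. Your extra remark that the restriction of a uniform ranking of $V$ to $V'$ is uniform on $V'$ is a detail the paper leaves implicit, but it does not change the approach.
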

\begin{proof}
The proof follows by repeated application of
Lemma~\ref{lem:monotonicity} for all $x$ that are not in
$\zeta^{-1}(U') \cup V'$.
\end{proof}

In light of Lemma~\ref{lem:perfectok}, to prove
Theorem~\ref{thm:ranking}, it is sufficient to show that the expected
size of the matching produced by Ranking on $H'$ is at least
$(1-1/e^{1/k} - o(1))|M_G|$.  To simplify notation, we instead assume
without loss of generality that $G = G'$, and hence $G$ has a perfect
matching.  Let $n = OPT_{1P} = |M_G| = |V|$.  Henceforth, fix an arrival order
$\pi$.  To simplify notation, we write $\ranking(\sigma)$ to mean
$\ranking(H, \pi, \sigma)$.

Let $f : U_H \rightarrow V$ be a map such that for all $v \in V$,
there are exactly $k$ vertices $u \in U_H$ such that $f(u) = v$.  The
existence of such a map $f$ follows from the assumption that $G$ has a
perfect matching.  For any vertex $v \in V$ let $f^{-1}(v)$ be the set
of $u \in U_H$ such that $f(u) = v$.  We proceed with the following
two lemmas.

\begin{lemma}\label{lem:easy}
Let $u \in U_H$, and let $v = f(u)$.  For any ranking $\sigma$, if $v$
is not matched by $\ranking(\sigma)$, then $u$ is matched to a vertex
whose rank is less than the rank of $v$ in $\sigma$.
\end{lemma}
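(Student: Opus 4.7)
The plan is to proceed by a short case analysis on the state of $v$ at the moment $u$ arrives under $\pi$, using only the definition of Ranking and the fact that $v = f(u)$ is necessarily a neighbor of $u$ in $H$. This neighbor relationship is the first thing I would check: since $f$ was built to witness a perfect matching of the induced subgraph, $(u,v) \in E_H$, so $v$ is a candidate whenever $u$ arrives and $v$ is still unmatched.

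First I would consider the case that $v$ is unmatched at the moment $u$ arrives. Because $v$ is a neighbor of $u$, the set $N(u)$ of unmatched neighbors of $u$ at that moment is nonempty, so Ranking matches $u$ to the unmatched neighbor of minimum rank, call it $v^\ast$. If $v^\ast = v$, then $v$ is matched by $\ranking(\sigma)$, contradicting the hypothesis. Hence $v^\ast \ne v$, and since Ranking preferred $v^\ast$ over $v$, we must have $\sigma(v^\ast) < \sigma(v)$, which is exactly the conclusion.

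Next I would handle the case that $v$ is already matched at the moment $u$ arrives. But matches made by Ranking are permanent, so $v$ would then remain matched in $\ranking(\sigma)$, again contradicting the hypothesis that $v$ is not matched by $\ranking(\sigma)$. This case is therefore vacuous, and together with the previous one exhausts all possibilities.

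There is no real obstacle here: the lemma is essentially a direct unpacking of the greedy rule defining Ranking, together with the observation that $v \in N(u)$ in $H$. The only mild subtlety to be careful with is phrasing the comparison of ranks correctly (the matched partner of $u$, not $u$ itself, is what has smaller $\sigma$-value than $v$), and making sure the argument uses $H$ and not $G$, since Ranking is being run on the left $k$-copy.
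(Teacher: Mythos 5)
Your proof is correct and follows essentially the same route as the paper's: both rest on the observation that $v=f(u)$ is a neighbor of $u$ in $H$, so if $v$ ends up unmatched by $\ranking(\sigma)$ it was still available when $u$ arrived, and the greedy rule then forces $u$ to be matched to a vertex of strictly lower rank than $v$. Your two-case analysis (and the remark that matches are permanent) merely spells out in more detail the paper's one-line version of the same argument.
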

\begin{proof}
If $v$ is not matched by $\ranking(\sigma)$, then since there is an
edge between $u$ and $v$, it was available to be matched to $u$ when
it arrived.  Therefore, by the behavior of $\ranking$, $u$ must have
been matched to a vertex of lower rank.
\end{proof}

\begin{lemma}\label{lem:technical}
Let $u \in U_H$, and let $v = f(u)$.  Fix an integer $t$ such that
$1 \leq t \leq n$.  Let $\sigma$ be a permutation, and let $\sigma'$
be the permutation obtained from $\sigma$ by removing vertex $v$ and
putting it back in so its rank is $t$.  If $v$ is not matched by
$\ranking(\sigma')$, then $u$ must be matched by $\ranking(\sigma)$ to
a vertex whose rank in $\sigma$ is less than or equal to $t$.
\end{lemma}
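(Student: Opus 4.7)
My plan is to apply Lemma~\ref{lem:easy} to pin down $u$'s match in $\ranking(\sigma')$ and then transfer this conclusion to $\ranking(\sigma)$ via the $V$-side view of Lemma~\ref{lem:duality}, in which $V$-vertices arrive in rank order and each matches to its $\pi$-smallest available $U$-neighbor. Since $v=f(u)$ is not matched by $\ranking(\sigma')$ and $(u,v)\in E_H$, Lemma~\ref{lem:easy} applied to $\sigma'$ yields a vertex $w$ with $\sigma'(w)<t$ such that $u$ is matched to $w$ in $\ranking(\sigma')$. Let $s=\sigma(v)$ and write $v_{(1)},v_{(2)},\ldots$ for $V\setminus\{v\}$ listed in $\sigma$-rank order; I split on $s\geq t$ versus $s<t$.

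In the easy case $s\geq t$, the first $t-1$ arrivals in $\sigma$ and $\sigma'$ are identical as ordered sequences (both equal $v_{(1)},\ldots,v_{(t-1)}$), so $\ranking(\sigma)$ and $\ranking(\sigma')$ produce the same partial matching through step $t-1$. Hence $u$ is matched to the same $w$ in $\ranking(\sigma)$, with $\sigma(w)=\sigma'(w)<t\leq t$.

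In the harder case $s<t$, I would induct on the gap $t-s$. Let $\sigma''$ be obtained from $\sigma$ by moving $v$ from position $s$ to position $t-1$; then $\sigma''$ differs from $\sigma'$ by a single adjacent transposition swapping $v$ (at position $t-1$ in $\sigma''$) with $v_{(t-1)}$ (at position $t$ in $\sigma''$). If $v$ is unmatched in $\ranking(\sigma'')$, apply the induction hypothesis to the triple $(\sigma,\sigma'',t-1)$: the gap shrinks from $t-s$ to $(t-1)-s$, and we conclude $u$ is matched in $\ranking(\sigma)$ to a vertex of rank $\leq t-1\leq t$, completing this sub-case.

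The main obstacle is the remaining sub-case, in which $v$ is matched in $\ranking(\sigma'')$ but becomes unmatched after the adjacent swap to $\sigma'$. Here $\ranking(\sigma'')$ and $\ranking(\sigma')$ agree through step $t-2$, then process $v$ and $v_{(t-1)}$ in opposite orders at steps $t-1$ and $t$. A case analysis on the $U$-neighbor that $v$ grabs in $\ranking(\sigma'')$ (call it $u^\star$) and on what $v_{(t-1)}$ does in $\ranking(\sigma')$ at step $t-1$ shows that the set of matched $U$-vertices reconverges by step $t$, and that the only match that can be redistributed lies among $\{w,v,v_{(t-1)}\}$. Each of these three vertices has $\sigma$-rank at most $t$: namely $\sigma(v)=s<t$, $\sigma(v_{(t-1)})=t$ (since $s<t$ forces $v_{(t-1)}$ into position $t$ in $\sigma$), and $\sigma(w)\leq t$ (because $w$ has $\sigma'$-position $<t$ and moving $v$ shifts the position of any other vertex by at most one). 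Tracing the match of $u$ back to $\ranking(\sigma)$ via this swap analysis therefore yields the desired rank bound.
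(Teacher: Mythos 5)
Your opening moves match the paper's: pass to the $V$-side process via Lemma~\ref{lem:duality}, apply Lemma~\ref{lem:easy} to $\sigma'$ to get that $u$ is matched to some $w$ with $\sigma'(w)<t$, and dispose of the case $\sigma(v)\geq t$ by noting the two runs share their first $t-1$ arrivals. The gap is in the case $\sigma(v)<t$. Your induction hypothesis is conditioned on $v$ being \emph{unmatched} in the permutation with $v$ re-inserted, so in the sub-case where $v$ \emph{is} matched by $\ranking(\sigma'')$ you have no tool whatsoever that relates anything to $\ranking(\sigma)$: the adjacent-swap analysis you sketch compares only $\ranking(\sigma'')$ with $\ranking(\sigma')$, both of which place $v$ at rank $t-1$ or $t$, while $\sigma$ places $v$ at rank $s$, possibly far earlier. "Tracing the match of $u$ back to $\ranking(\sigma)$" is exactly the step that needs a proof, and the induction as set up cannot supply it, because the needed instance of the hypothesis (comparing $\sigma$ with $\sigma''$) is blocked by the very assumption that defines the sub-case. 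Moreover, the local claim that the matched $U$-sets "reconverge by step $t$" after the adjacent swap is not correct as stated: if $v$ is matched at step $t-1$ in $\sigma''$, the vertex $v_{(t-1)}$ arriving at step $t$ may match an additional $U$-vertex that is never matched by step $t$ in $\sigma'$, so one only gets a one-sided containment (and chaining such containments across the $t-s$ swaps inflates the time index, yielding a rank bound near $2t-s$ rather than $t$).

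The paper closes this hole differently: after Lemma~\ref{lem:easy} gives $u\in R_{t-1}(\sigma')$, it proves $R_{s-1}(\sigma')\subseteq R_s(\sigma)$ for all $1\leq s\leq t$ by induction on the arrival time $s$, comparing the two runs of $\ranking'$ step by step (the same exchange argument as in Lemma~\ref{lem:monotonicity}); the case $\tilde t\geq t$ is the trivial common-prefix observation you already have, and the case $\tilde t<t$ is handled entirely by this time-indexed containment. To repair your argument you would either need to reproduce such a containment statement (which makes the adjacent-swap induction superfluous) or strengthen your induction to a statement not conditioned on $v$ being unmatched; as written, the final sub-case is a genuine gap.
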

\begin{proof}
For the proof, it is convenient to invoke Lemma~\ref{lem:duality} and
consider $\ranking'(\sigma)$ and $\ranking'(\sigma')$ instead of
$\ranking(\sigma)$ and $\ranking(\sigma')$.  In the process by which
$\ranking'$ constructs its matching, call the moment that the
$t^\textrm{th}$ vertex in $V$ arrives \emph{time} $t$.  For any $1
\leq s \leq n$, let $R_s(\sigma)$ (resp., $R_s(\sigma')$) be the set
of vertices in $U_H$ matched by time $s$ in $\sigma$ (resp.,
$\sigma'$).  By Lemma~\ref{lem:easy}, if $v$ is not matched by
$\ranking(\sigma')$, then $u$ must be matched to a vertex $v'$ in
$\ranking(\sigma')$ such that $\sigma'(v') < \sigma'(v)$.  Hence $u
\in R_{t-1}(\sigma')$.  We prove the lemma by showing that
$R_{t-1}(\sigma') \subseteq R_t(\sigma)$.
Let $\tilde{t}$ be the time that $v$ arrives in $\sigma$.  Then
if $\tilde{t} \geq t$, the two orders $\sigma$ and $\sigma$ are
identical through time $t$, which implies that
$R_{t-1}(\sigma') = R_{t-1}(\sigma) \subseteq R_t(\sigma)$.

Now, in the case that $\tilde{t} < t$, we prove that for $1 \leq s
\leq t$, $R_{s-1}(\sigma') \subseteq R_s(\sigma)$.  The proof, which
is similar to the proof of Lemma~\ref{lem:monotonicity}, proceeds by
induction on $s$.  When $s = 0$, the claim clearly holds, since
$R_0(\sigma') = \emptyset$.  Now, supposing that $R_{s-2}(\sigma')
\subseteq R_{s-1}(\sigma)$, we prove that $R_{s-1}(\sigma') \subseteq
R_s(\sigma)$.  If $s \leq \tilde{t}$, then the two orders $\sigma$ and
$\sigma'$ are identical through time $s$, so $R_{s-1}(\sigma') =
R_{s-1}(\sigma) \subseteq R_s(\sigma)$.  Now suppose that $s >
\tilde{t}$.  Then the vertex that arrives at time $s-1$ in $\sigma'$
is the same as the vertex that arrives at time $s$ in $\sigma$.  Call
this vertex $w$.  If $w$ is not matched by $\ranking'(\sigma')$, then
$R_{s-1}(\sigma') = R_{s-2}(\sigma')$, and we are done by the
induction hypothesis.  Now suppose that $w$ is matched to vertex $x'$
by $\ranking'(\sigma')$ and to vertex $x$ by $\ranking'(\sigma)$.  If
$x = x'$, then again we are done by the induction hypothesis, so
suppose that $x \ne x'$.  Since $x$ was available at time $s-1$ in
$\sigma$, we have $x \not\in R_{s-1}(\sigma)$, and by the induction
hypothesis $x \not\in R_{s-2}(\sigma')$.  Hence, $x$ was available at
time $s-1$ in $\sigma'$.  Since $\ranking'(\sigma')$ matched $w$ to
$x'$, it must be that $\pi(x') < \pi(x)$.  This implies that $x'$ must
be matched when $w$ arrives at time $s$ in $\sigma$, or in other
words, $x' \in R_{s-1}(\sigma) \subseteq R_s(\sigma)$.  By the
induction hypothesis, we are done.
\end{proof}

\begin{lemma}\label{lem:rankingmain}
For $1 \leq t \leq n$, let $x_t$ denote the probability over $\sigma$
that the vertex ranked $t$ in $V$ is matched by $\ranking(\sigma)$.
Then
\begin{equation}\label{eqn:rankingmain}
1 - x_t \leq \frac{1}{kn} \sum_{s = 1}^t x_s \enspace .
\end{equation}
\end{lemma}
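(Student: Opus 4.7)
The plan is to apply Lemma~\ref{lem:technical} once for every $u \in U_H$ and then aggregate the resulting inequalities. Fix $t \in \braces{1,\ldots,n}$, and for each $u \in U_H$ with $v = f(u)$, let $\sigma'(\sigma, u)$ denote the ranking obtained from $\sigma$ by removing $v$ and re-inserting it at position $t$. Lemma~\ref{lem:technical} then gives, for every $u$,
\begin{equation*}
\pr_\sigma\bracks{f(u) \text{ unmatched in } \sigma'(\sigma,u)} \le \pr_\sigma\bracks{u \text{ matched in } \sigma \text{ to a bidder of rank} \le t}.
\end{equation*}
Summing this inequality over $u \in U_H$ will produce the desired bound, provided we can identify both sides.

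For the left-hand side, the key observation is that when $\sigma$ is uniform on permutations of $V$, $\sigma'(\sigma, u)$ is uniform over permutations of $V$ that place $v = f(u)$ at rank $t$. Hence the left-hand probability equals the conditional probability $\pr\bracks{v \text{ unmatched} \mid \sigma(v) = t}$. Since $f$ is $k$-to-one by the definition of a left $k$-copy, summing over $u \in U_H$ introduces a factor of $k$, and summing the result over $v \in V$ gives $kn(1-x_t)$, because
\begin{equation*}
1 - x_t \;=\; \pr\bracks{\sigma^{-1}(t) \text{ unmatched}} \;=\; \tfrac{1}{n} \sum_{v \in V} \pr\bracks{v \text{ unmatched} \mid \sigma(v) = t}.
\end{equation*}

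For the right-hand side, swap the order of summation to obtain
\begin{equation*}
\sum_{u \in U_H} \pr\bracks{u \text{ matched to rank} \le t} \;=\; \sum_{s=1}^{t} \sum_{u \in U_H} \pr\bracks{u \text{ matched to the rank-}s \text{ bidder}}.
\end{equation*}
For each fixed $s$, at most one $u \in U_H$ can be matched to the rank-$s$ bidder in any given $\sigma$, so the inner sum collapses to $\pr\bracks{\text{the rank-}s \text{ bidder is matched}} = x_s$. Combining the two computations yields $kn(1 - x_t) \le \sum_{s=1}^{t} x_s$, which is the claim. The main delicate point is the probabilistic bookkeeping around $\sigma$ and $\sigma'$, specifically the symmetry argument that moving a fixed vertex to position $t$ in a uniform random permutation produces the uniform distribution conditioned on that vertex sitting at $t$; every other step is routine counting, and the factor $k$ in the denominator arises entirely from the $k$-to-one structure of $f$.
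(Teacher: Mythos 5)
Your proof is correct and is essentially the paper's argument: both rest on Lemma~\ref{lem:technical} together with the observation that moving $v$ to rank $t$ in a uniform $\sigma$ yields the uniform distribution conditioned on $\sigma(v)=t$, and both average over the $kn$ vertices of $U_H$ via the $k$-to-one map $f$. The only difference is presentational --- you sum the $kn$ pointwise inequalities explicitly (splitting the right-hand side by rank $s$), whereas the paper encodes the same averaging by choosing $v$ uniformly from $V$, $u$ uniformly from $f^{-1}(v)$, and bounding $\pr[u \in R_t]$ by $\expect[|R_t|]/(kn)$.
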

\begin{proof}
Let $\sigma$ be permutation chosen uniformly at random, and let
$\sigma'$ be a permutation obtained from $\sigma$ by choosing a vertex
$v \in V$ uniformly at random, taking it out of $\sigma$, and putting
it back so that its rank is $t$.  Note that both $\sigma$ and
$\sigma'$ are distributed uniformly at random among all permutations.
Let $u$ be a vertex chosen uniformly at random from $f^{-1}(v)$.  Note
that conditioned on $\sigma$, $u$ is equally likely to be any of the
$kn$ vertices in $U_H$.  Let $R_t$ be the set of vertices in $U_H$
that are matched by $\ranking(\sigma)$ to a vertex of rank $t$ or
lower in $\sigma$.  Lemma~\ref{lem:technical} states that if $v$ is
not matched by $\ranking(\sigma')$, then $u \in R_t$.  The expected
size of $R_t$ is $\sum_{1 \leq s \leq t} x_s$.  Hence, the probability
that $u \in R_t$, conditioned on $\sigma$, is $(1/(kn)) \sum_{1 \leq s
  \leq t} x_s$.  The lemma follows because the probability that $v$ is
not matched by $\ranking(\sigma')$ is $1 - x_t$.
\end{proof}

We are now ready to prove Theorem~\ref{thm:ranking}.
\begin{proof}[Proof of Theorem~\ref{thm:ranking}.]
For $0 \leq t \leq n$, let $S_t = \sum_{1 \leq s \leq t} x_s$.  Then
the expected size of the matching returned by Ranking on $H$ is $S_n$.
Rearranging (\ref{eqn:rankingmain}) yields, for $1 \leq t \leq n$,
\begin{equation*}
S_t \geq \paren{\frac{kn}{kn+1}}\paren{1 + S_{t-1}},
\end{equation*}
which by induction implies that $S_t \geq \sum_{1 \leq s \leq t}
(kn/(kn+1))^s$, and hence
\begin{equation*}
S_n \geq \sum_{s = 1}^n \paren{\frac{kn}{kn+1}}^s
= kn \paren{1 - \paren{1 - \frac{1}{kn+1}}^n}
= kn \paren{1 - \frac{1}{e^{1/k}} + o(1)} \enspace .
\end{equation*}

\end{proof}

\section{Other Missing Proofs}

In this appendix, we provide the missing proofs from the body of the
paper.

\subsection{Proof of Theorem~\ref{thm:hardness_2paa}}\label{appendix:hardness_2paa}

Fix a constant $c' > c$, and let $n_0$ be the smallest integer such
that for all $n \geq n_0$,
\begin{equation}\label{eqn:asymptotic1}
c' \cdot \frac{c(n^5 + n + 2)}{cn^2 + n + 2} \geq c(n^3 + cn^2 + n + 2)
\end{equation}
and 
\begin{equation}\label{eqn:asymptotic2}
\frac{n/2 + 1}{2} \geq c \enspace .
\end{equation}
Note that since $n_0$ depends only on $c'$, it is a constant.  

We reduce from PARTITION, in which the input is a set of $n \geq n_0$
items, and the weight of the $i$-th item is given by $w_i$.  If $W =
\sum_{i = 1}^n w_i$, then the question is whether there is a partition
of the items into two subsets of size $n/2$ such that the sum of the
$w_i$'s in each subset is $W/2$.  It is known that this problem (even
when the subsets must both have size $n/2$) is NP-hard~\cite{Garey79}.

Suppose that there is an $m/c'$-approximation algorithm to 2PAA($c$);
we will show that constructing the following instance of 2PAA($c$)
(illustrated in Figure~\ref{fig:hardness_2PMBA_small}) allows us to
use the $m/c'$-approximation to solve the PARTITION instance:
\begin{itemize}
\item
First, create $n + 2$ keywords  $c_1, \ldots, c_n, e_1, e_2$.  Second,
create an additional set 
\begin{equation*}
G=\setnot{g_{i,k}}{1 \leq i \leq n^2 \mbox{ and } 1 \leq k \leq c}
\end{equation*}
of $cn^2$ keywords.  The keywords arrive in the order
\begin{equation*}
c_1, \ldots, c_n, e_1, e_2, g_{1,1}, \ldots, g_{1,c}, \ldots \ldots, g_{n^2,1}, \ldots, g_{n^2,c} \enspace .
\end{equation*}

\item
Create $n^2 + 4$ bidders $a, d_1, d_2, f, h_1, \ldots, h_{n^2}$.  Set
the budgets of $a$, $d_1$, and $d_2$ to $cW(1 + n/2)$.  Set the budget
of $f$ to $cW(n^3 + 1)$.  For $1 \leq i \leq n^2$, set the budget of
$h_i$ to $cWn^3$.

\item
For $1 \leq i \leq n$, bidders $a$, $d_1$, and $d_2$ bid $c(w_i + W)$
on keyword $c_i$.

\item
For $j \in \braces{1,2}$, bidder $d_j$ bids $cW$ on keyword $e_j$.
Bidder $f$ bids $cW/2$ on both $e_1$ and $e_2$.

\item
For $1 \leq i \leq n^2$ and $1 \leq k \leq c$, keyword $g_{i,k}$
receives a bid of $W(n^3 + 1)$ from bidder $f$ and a bid of $Wn^3$
from bidder $h_i$.
\end{itemize}
This reduction can clearly be performed in polynomial time.
Furthermore, it can easily be checked that (\ref{eqn:asymptotic2})
implies that no bidder bids more than $1/c$ of its budget on any
keyword.

\begin{figure}
  \begin{center}
    \scalebox{0.9}{
      \includegraphics{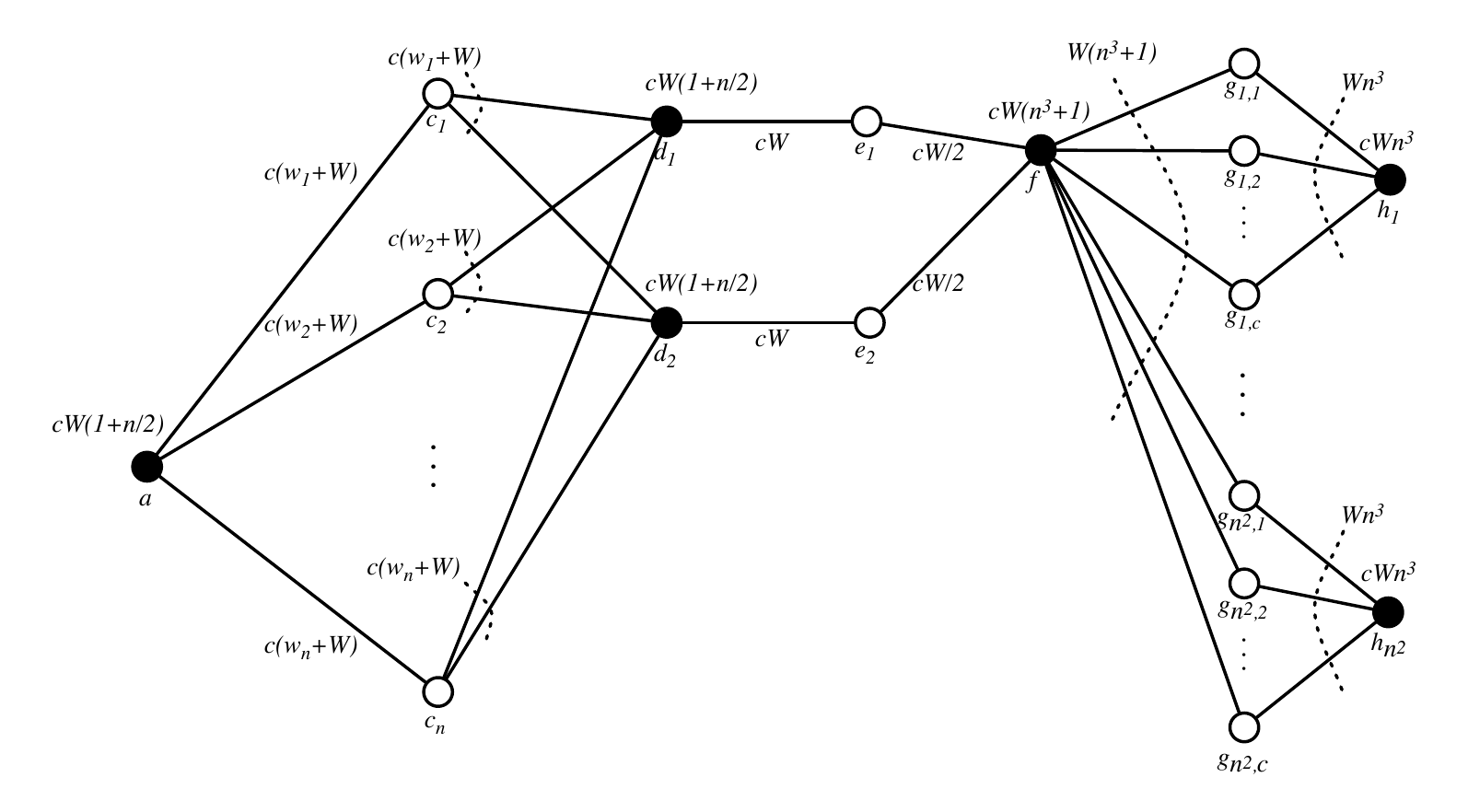}
    }
  \end{center}
  \caption{\small The 2PAA($c$) instance of the reduction.  Each bidder's
    budget is shown above its node, and the bids of bidders for
    keywords is shown near the corresponding edge.  }
  \label{fig:hardness_2PMBA_small}
\end{figure}

We first show that if the PARTITION instance is a ``yes'' instance,
then there exists a feasible solution to the 2PAA($c$) instance of
value at least $cW(n^5 + n + 2)$.  Let $U \subseteq \bracks{n}$ be
such that $|U| = n/2$ and $\sum_{i \in U} w_i = \sum_{i \in
  \overline{U}} w_i = W/2$.

We construct a solution to the 2PAA($c$) instance as follows.  For
every $i \in U$, allocate $c_i$ to $d_1$, and for every $i \in
\overline{U}$, allocate $c_i$ to $d_2$.  For each of these
allocations, choose $a$ as the second-price bidder.  This will reduce
the budget of $d_1$ and $d_2$ to exactly $cW/2$, and hence the bids
from $d_1$ to $e_1$ and from $d_2$ to $e_2$ will be both be reduced to
$cW/2$.  Allocate $e_1$ to $f$ choosing $d_1$ as the second-price
bidder, and allocate $e_2$ to $f$ choosing $d_2$ as the second-price
bidder.  This will reduce the budget of $f$ to $cWn^3$.  The profit
from the solution constructed so far is $cW(n+2)$.  Now allocate
$g_{1,1}, g_{1,2},\ldots,g_{1,c-1}$ to $f$, choosing $h_1$ as the
second-price bidder.  This will reduce the budget of $f$ to $Wn^3$.
Hence, it can act as the second-price bidder for each of the remaining
keywords in $G$.  Allocate $g_{1,c}$ to $h_1$, choosing $f$ as the
second-price bidder, and then, for $2 \leq i \leq n^2$ and $1 \leq k
\leq c$, allocate $g_{i,k}$ to $h_i$, choosing $f$ as the second-price
bidder.  The profit obtained for each keyword in $G$ in this
assignment is $Wn^3$.  Since $|G| = cn^2$, the total profit of the
solution constructed is $cW(n+2) + cWn^5 = cW(n^5 + n + 2)$.

We now show that if there is a second-price matching in the 2PAA($c$)
instance of value at least $cW(n^3 + cn^2 + n + 2)$, then there must
be a partition of $w_1, \ldots, w_n$.  In such a matching, at most
$cW(n + 2)$ units of profit can be obtained from keywords $c_1,
\ldots, c_n, e_1, e_2$, since the initial second-highest bids on those
keywords sum to $cW(n+2)$.  Hence, at least $cW(n^3 + cn^2)$ profit
must come from the keywords in $G$.

Suppose that the budget of $f$ is greater than $cWn^3$ after keywords
$e_1$ and $e_2$ are allocated.  Note that at least $c$ of the keywords
in $G$ must be allocated to reach a profit of $cW(n^3 + cn^2)$ on these
keywords.  Consider what happens after the first $c$ of the keywords
in $G$ are assigned.  For each of these keywords, $f$ must have been
the first-price bidder, so its budget is reduced to an amount greater
than $0$ and less than or equal to $cW$.  Hence, for each keyword in
$G$ allocated henceforth, $f$ is the second-price bidder, and the
profit is at most $cW$.  Since there are at most $c(n^2 - 1)$ more
keywords in $G$, the total profit from the keywords in $G$ is at most
$cWn^3 + c^2W(n^2 - 1)$, which contradicts the fact that at least
$cW(n^3 + cn^2)$ units of profit must come from $G$.  Hence, we
conclude that the budget of $f$ is less than or equal to $cWn^3$ after
keywords $e_1$ and $e_2$ are allocated.

The budget of $f$ can only be smaller than $cWn^3$ if $f$ acts as the
first-price bidder for both $e_1$ and $e_2$.  But this can happen only
if the budgets of both $d_1$ and $d_2$ are reduced to an amount less
than or equal to $cW/2$.  For $j \in \braces{1,2}$, let $U_j \subseteq
\bracks{n}$ be the set of indices $i$ such that $d_j$ acts as the
first-price bidder for $i$.  For both $j$, we have that
\begin{equation}\label{eqn:impliespartition}
\sum_{i \in U_j} c(w_i + W) \geq \frac{cW}{2} + \frac{cWn}{2} \enspace .
\end{equation}
Rearranging (\ref{eqn:impliespartition}) yields $\sum_{i \in U_j} W
\geq W/2 + Wn/2 - \sum_{i \in U_j} w_i$, which implies $W |U_j| \geq
W/2 + Wn/2 - W$, and hence $|U_j| \geq n/2 - 1/2$.  By integrality,
then, $|U_j| \geq n/2$ for both $j$.  Hence $|U_j| = n/2$ for both
$j$, and using (\ref{eqn:impliespartition}) again, we have
$\sum_{i \in U_j} c w_i + cW|U_j| \geq cW/2 + cWn/2$
which implies that $\sum_{i \in U_j} w_i \geq W / 2$ for both $j$.
Therefore, the partition defined by $U_1$ and $U_2$ is a solution to
the PARTITION instance.

To conclude the proof, note that the number of keywords in the
2PAA($c$) instance is $cn^2 + n + 2$.  Hence, if the PARTITION
instance is a ``yes'' instance, then by (\ref{eqn:asymptotic1}), we
can run the $m/c'$-approximation algorithm to find a second-price
matching of value at least $cW(n^3+cn^2+n+2)$, and if the
PARTITION instance is a ``no'' instance, then the value of the
solution returned by the algorithm must be strictly less than $cW(n^3
+ cn^2 + n+ 2)$.  Hence, an $m/c'$-approximation algorithm for 2PAA($c$)
can be used to solve PARTITION.

\subsection{Proof of Theorem~\ref{thm:hardness_2pm}}\label{appendix:hardness_2pm}

  The proof is by a reduction from vertex cover.  Given a graph $G$,
  we construct an instance $f(G)$ of 2PM as follows. First, for each
  edge $e \in E(G)$, we create a keyword with the same label (called
  an \emph{edge keyword}), and for each vertex $v \in V(G)$, we create
  a bidder with the same label (called a \emph{vertex bidder}). Bidder
  $v$ bids for keyword $e$ if vertex $v$ is one of the two end points
  of edge $e$.  (Recall that in 2PM, if a bidder makes a non-zero bid
  for a keyword, that bid is $1$.)  In addition, for each edge $e$, we
  create a unique bidder $x_e$ who also bids for $e$. Furthermore, for
  each vertex $v$, we create a gadget containing two keywords $h_v$
  and $l_v$ and two bidders $y_v$ and $z_v$. We let $v$ and $y_v$ bid
  for $h_v$; and $y_v$ and $z_v$ bid for $l_v$.  The keywords arrive
  in an order such that for each $v \in V(G)$, keyword $h_v$ comes
  before $l_v$, and the edge keywords arrive after all of the $h_v$'s
  and $l_v$'s have arrived.  An example of this reduction is shown in
  Figure~\ref{fig:hardness_2pm}.

  \begin{figure}
    \begin{center}
      \begin{tabular}{cc}
        \subfigure[]{\scalebox{0.7}{\label{fig:hardness_2pm:a}
            \includegraphics{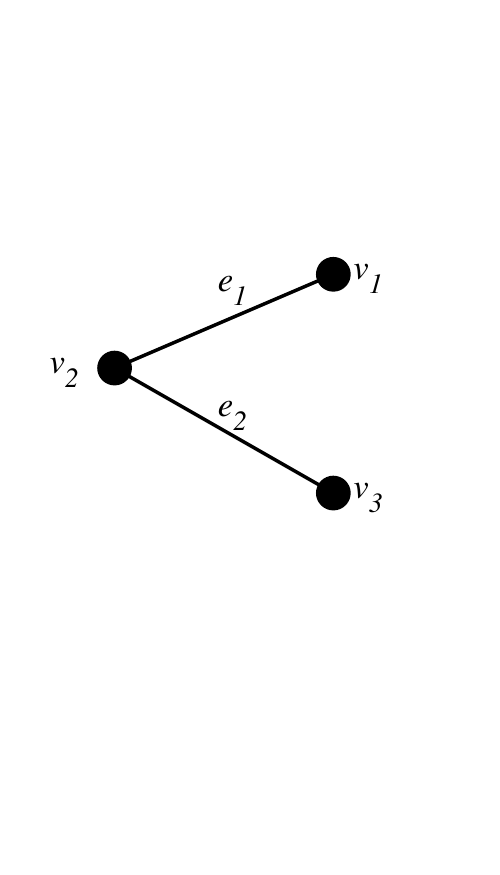}
          } 
        } &
        \subfigure[]{\scalebox{0.7}{\label{fig:hardness_2pm:b}
            \includegraphics{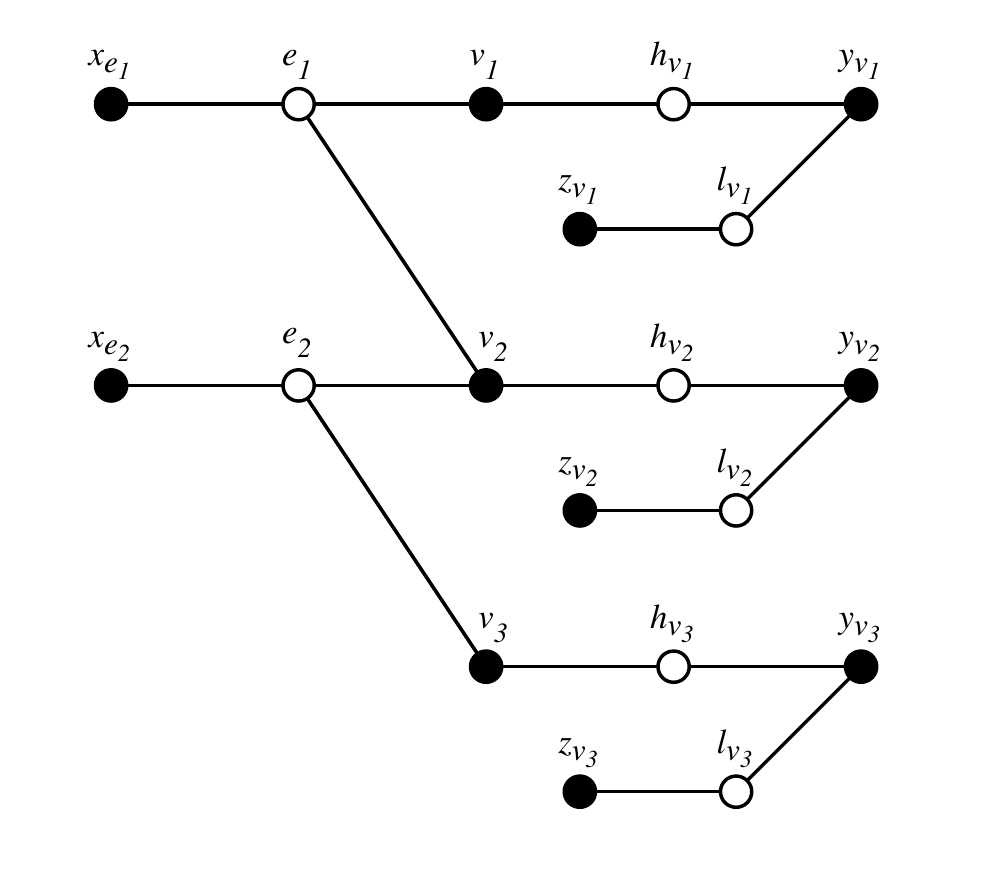}
          }
        }
      \end{tabular}
    \end{center}
    \caption{\small The reduction from an instance $G$ of vertex cover
      (Figure~\ref{fig:hardness_2pm:a}) to an instance $f(G)$ of
      2PM (Figure~\ref{fig:hardness_2pm:b}).
    } \label{fig:hardness_2pm}
  \end{figure}

  The following lemma provides the basis of the proof.
  \begin{lemma}\label{lem:hardness_2pm}
    Let $OPT_{VC}$ and $OPT_{2P}$ be the size of the minimum vertex
    cover of $G$ and the maximum second-price matching on $f(G)$,
    respectively. Then
    $$
      OPT_{2P} = 2|V(G)| + |E(G)| - OPT_{VC}
    $$
  \end{lemma}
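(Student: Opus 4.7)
The plan is to upper- and lower-bound $OPT_{2P}$ by reducing the maximization to a combinatorial expression in a subset $S \subseteq V(G)$, and then invoking a vertex-cover duality.

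First, I would analyze each gadget at a vertex $v$. Because $h_v$ arrives before $l_v$ and both arrive before every edge keyword, a short case analysis shows that the only way to extract profit $2$ from the pair $\{h_v, l_v\}$ is to assign $h_v$ to $v$ (with $y_v$ as second-price bidder, which does not consume $y_v$) and then match $l_v$ via $\{y_v, z_v\}$; any other choice yields gadget profit at most $1$ and leaves $v$ still available. Let $S \subseteq V(G)$ be the set of vertices whose $h_v$-keyword is matched to $v$; then the total gadget contribution is at most $|V(G)| + |S|$, and this bound is attained by the obvious construction.

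Second, I would handle the edge keywords. Since $x_e$ bids only on $e$, $x_e$ is guaranteed to be unmatched when $e$ arrives, and using $x_e$ as the first-price bidder never consumes any vertex bidder. So whether an edge $e = (u,w)$ can be made profitable depends only on whether at least one of $u, w$ is still unmatched; in particular, the only edges that certainly cannot be matched are those with both endpoints already in $S$. This yields an edge contribution of at most $|E(G)| - m(S)$, where $m(S) = |\{e \in E(G) : e \subseteq S\}|$, with the bound tight when every matched edge uses $x_e$ as first-price bidder. Combining the two bounds,
\begin{equation*}
OPT_{2P} \;=\; |V(G)| + |E(G)| + \max_{S \subseteq V(G)} \bigl(|S| - m(S)\bigr).
\end{equation*}

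Third, I would evaluate this maximum via vertex-cover duality. Setting $T = V(G) \setminus S$ and letting $c(T)$ denote the number of edges with at least one endpoint in $T$, one obtains $|S| - m(S) = |V(G)| - |E(G)| + c(T) - |T|$. Any $T$ extends to a vertex cover by adding at most one vertex per uncovered edge, so $OPT_{VC} \le |T| + |E(G)| - c(T)$, i.e., $c(T) - |T| \le |E(G)| - OPT_{VC}$, with equality at a minimum vertex cover $T^*$. Substituting yields $OPT_{2P} = 2|V(G)| + |E(G)| - OPT_{VC}$, and the attaining construction takes $S = V(G) \setminus T^*$: match $h_v$ to $v$ and $l_v$ to $y_v$ for $v \notin T^*$, match $l_v$ to $y_v$ for $v \in T^*$, and match each edge $e$ with $x_e$ as first-price and a $T^*$-endpoint as second-price (which exists since $T^*$ is a vertex cover). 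The only genuinely delicate step is the gadget case analysis, which is what forces the binary ``consume $v$ now versus keep $v$ to cover incident edges later'' choice that produces the vertex-cover structure.
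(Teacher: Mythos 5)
Your proof is correct, and your lower-bound construction (profit $2$ per gadget outside the cover, profit $1$ per gadget in the cover without consuming $v$, and each edge keyword assigned to $x_e$ with a cover vertex as second-price bidder) is essentially identical to the paper's. Where you genuinely diverge is the upper bound. The paper proceeds by an exchange argument: it transforms an arbitrary optimal second-price matching, without loss of profit, into one in which \emph{every} edge keyword is allocated to $x_e$ for profit $1$, and then reads off the unmatched vertex bidders as a vertex cover via an exact counting identity $|T| = 2|V(G)| + |E(G)| - OPT_{2P}$. You avoid that normalization entirely: you parameterize any solution by the set $S$ of vertices matched to their own $h_v$, bound the gadget contribution by $|V(G)| + |S|$ (using the same case analysis that profit $2$ forces $h_v \mapsto v$) and the edge contribution by $|E(G)| - m(S)$ (edges inside $S$ are hopeless since both endpoints are consumed before any edge keyword arrives), and then dispose of $\max_S\bigl(|S| - m(S)\bigr)$ by the greedy observation that $V(G)\setminus S$ extends to a vertex cover by adding one endpoint per edge of $S$, with equality at the complement of a minimum cover. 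Your route is a bit cleaner and more self-contained for establishing the exact identity, since it needs no claim about modifying an optimal solution; the paper's route has the advantage that its normalization lemma (``assume every edge keyword is allocated to $x_e$'') is constructive and is reused verbatim in the proof of the APX-hardness theorem to convert an $\alpha$-approximate second-price matching into a small vertex cover, so if you adopted your version you would still need an algorithmic statement of that kind (your $S$-based bound does yield one: take $T$ to be $V(G)\setminus S$ plus one endpoint of each edge inside $S$, which is computable from any given solution) to complete the reduction in Theorem~\ref{thm:hardness_2pm}.
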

\begin{proof}
  We first show that given a vertex cover $S$ of size $OPT_{VC}$ of
  $G$, we can construct a solution to the 2PM instance whose value is
  $2|V(G)| + |E(G)| - OPT_{VC}$. For each vertex $v \notin S$, we
  allocate $h_v$ to $v$ (with $y_v$ acting as the second-price bidder)
  and $l_v$ to $y_v$ (with $z_v$ acting as the second-price bidder),
  getting a profit of $2$ from the gadget for $v$.  For each vertex $v
  \in S$, we allocate $h_v$ to $y_v$ (with $v$ acting as the
  second-price bidder) and ignore $l_v$, getting a profit of $1$ from
  the gadget for $v$.  We then allocate each edge keyword $e$ to
  $x_e$.  During each of these edge keyword allocations, at least one
  of the two vertex bidders that bid for $e$ is still available, since
  $S$ is a vertex cover.  Hence, for each of these allocations, there
  is a bidder that can act as a second-price bidder, and the profit
  from the allocation is $1$.  This allocation yields a second-price
  matching of size $2|V(G)| + |E(G)| - OPT_{VC}$.  Therefore,
  $OPT_{2P} \geq 2|V(G)| + |E(G)| - OPT_{VC}$.
  
  To show that $OPT_{2P} \leq 2|V(G)| + |E(G)| - OPT_{VC}$, we start
  with an optimal solution to $f(G)$ of value $OPT_{2P}$ and
  construct a vertex cover of $G$ of size $2|V(G)| + |E(G)| -
  OPT_{2P}$.  To do this, we first claim that there exists an optimal
  solution of $f(G)$ in which every edge-keyword is allocated for a
  profit of $1$. Consider any optimal second-price matching of the
  instance.  Let $e$ be an edge-keyword $e$ that is not allocated for
  a profit of $1$.  If it is adjacent to a vertex bidder that is
  unassigned when $e$ arrives, then $e$ can be allocated to $x_e$ for
  a profit of $1$, which can only increase the value of the solution.
  Suppose, on the other hand, that both of its vertex bidders are not
  available when $e$ arrives.  Let $v$ be a vertex bidder that bids
  for $e$.  Since it is not available, $h_v$ must have been allocated
  to $v$.  We can transform this second-price matching to another one in
  which $h_v$ is assigned to $y_v$, $l_v$ is ignored and $e$ is
  assigned to $x_e$, with $v$ acting the second-price bidder in both
  cases.  This does not decrease the total profit of the solution.
  Hence, we can perform these transformations for each edge keyword
  $e$ that is not allocated for a profit of $1$ until we obtain a new
  optimal solution in which each edge keyword is allocated for a
  profit of $1$.

  Now consider an optimal second-price matching in which all edge
  keywords are allocated for a profit of $1$. Let $T \subseteq V(G)$
  be the set of vertices represented by vertex bidders that are not
  allocated any keywords in this second-price matching.  Then $|T| =
  2|V(G)| + |E(G)| - |OPT_{2P}|$, and $T$ is a vertex cover, which
  implies $OPT_{VC} \leq 2|V(G)| + |E(G)| - OPT_{2P}$. The lemma
  follows.
\end{proof}

Now, suppose that we have an $\alpha$-approximation for 2PM.  We will
show how to use this approximation algorithm and our reduction to
obtain an $((8\alpha - 7)/\alpha)$-approximation for Vertex Cover on
4-regular graphs.  By Theorem~\ref{thm:hardness_vc}, this means that
$(8\alpha - 7)/\alpha \geq 53/52$, and hence $\alpha \geq 364/363$,
unless $P = NP$.

To construct this $((8\alpha - 7)/\alpha)$-approximation algorithm,
given a 4-regular graph $G$, run the above reduction to obtain a 2PM
instance $f(G)$.  Then use the $\alpha$-approximation to obtain a
second-price matching $M$ whose value is at least $OPT_{2P}/\alpha$.
Now, just as in the proof of Lemma~\ref{lem:hardness_2pm}, we can
assume that in $M$, every edge keyword $e$ is allocated to $x_e$.
Hence, the set of vertices $T$ associated with the vertex bidders that
are not allocated a keyword form a vertex cover, and
\begin{eqnarray}
|T| 
& \leq & 2|V(G)| + |E(G)| - OPT_{2P}/\alpha \nonumber \\
& =    & 2|V(G)| + |E(G)| - (2|V(G)| + |E(G)| - OPT_{VC})/\alpha  \nonumber \\
& =    & (1-1/\alpha)(2|V(G)| + |E(G)|) + OPT_{VC}/\alpha \label{eqn:h2pm}
\end{eqnarray}
Since $G$ is $4$-regular, we have $OPT_{VC} \geq m/4 =
(2|V(G)|+|E(G)|)/8$, and hence by (\ref{eqn:h2pm}), we conclude that
$|T| \leq ((8\alpha - 7)/\alpha) OPT_{VC}$, which finishes the proof
of the theorem.

\subsection{Proof of Theorem~\ref{thm:2_approx}}\label{appendix:2_approx}

Since the number of vertices matched by $f$ is an upper bound on the
profit of the maximum second-price matching on $G$, we need only to
prove that the second-price matching contains at least half of the
keywords matched by $f$.  By the behavior of the algorithm, it is
clear that whenever a vertex $u$ is matched to $f(u)$ in the
second-price matching, the profit obtained is $1$.  Furthermore, every
time an an edge is removed from $f$, a new keyword is added to the
second-price matching. Thus, the theorem follows.

\subsection{Proof of Theorem~\ref{thm:rand_lower_online2pm}\label{appendix:rand_lower_online2pm}}

  We invoke Yao's Principle~\cite{Yao77} and construct a distribution
  of inputs for which the best deterministic algorithm achieves an
  expected performance of (asymptotically) $1/2$ the value of the
  optimal solution.

  Our distribution is constructed as follows.  The first keyword
  arrives, and it is adjacent to two bidders.  Then the second keyword
  arrives, and it is adjacent to one of the two bidders adjacent to
  the first keyword, chosen uniformly at random, as well as a new
  bidder; then the third keyword arrives, and it is adjacent to one of
  the bidders adjacent to the second keyword, chosen uniformly at
  random, as well as a new bidder; and so on, until the $m$-th keyword
  arrives.  We call this a \emph{normal} instance.  To analyze the
  performance of the online algorithms, we also define a
  \emph{restricted} instance to be one that is exactly the same as a
  normal instance except that one of the two bidders of the first
  keyword is marked \emph{unavailable}, i.e., he can not participate
  in any auction.
  
  Clearly, an offline algorithm that knows the random choices beforehand
  can allocate each keyword to the bidder that will not be adjacent to 
  the keyword that arrives next.  In this way, it can ensure that for each
  keyword, there is a bidder that can act as a second-price bidder.
  Hence for a normal instance, the optimal second-price matching obtains
  a profit of $m$.

  Consider the algorithm Greedy, which allocates a keyword to an
  arbitrary adjacent bidder if and only if there is another available
  bidder to act as a second-price bidder.  Our proof consists of two
  steps: first, we will show that the expected performance of Greedy
  on the normal instance is $(m + 1)/2$, and second we will prove that
  Greedy is the best algorithm in expectation for both types of
  instances.

  Let $X_k^*$ and $Y_k^*$ be the \emph{expected} profit of Greedy on a
  normal and a restricted instance of $k$ keywords, respectively
  (where $X_0^*$ and $Y_0^*$ are both defined to be $0$). Given the
  first keyword of a normal instance, Greedy allocates it to an
  arbitrary bidder. Then, with probability $1/2$, it is faced with a
  normal instance of $k-1$ keywords, and with probability $1/2$, it is
  faced with a restricted instance of $k-1$ keywords. Therefore, for
  all integers $k \geq 1$,
  \begin{equation}
    \label{eq:1}
    X_{k}^* = 1/2(X_{k-1}^* + Y_{k-1}^*) + 1 \enspace .
  \end{equation}
  
  On the other hand, given the first keyword of a restricted instance,
  Greedy just waits for the second keyword.  Then, with probability
  $1/2$, the second keyword chooses the marked bidder, giving Greedy a
  restricted instance of $k-1$ keywords, and with probability $1/2$,
  the second keyword chooses the unmarked bidder, giving Greedy a
  normal instance of $k-1$ keywords. Therefore, for all $k$,
  \begin{equation}
    \label{eq:2}
    Y_k^* = 1/2(X^*_{k-1} + Y^*_{k-1}) \enspace .
  \end{equation}
  
  From (\ref{eq:1}) and (\ref{eq:2}) we have, for all $k$,
  \begin{equation}
    \label{eq:3}
    Y_k^* = X_k^* - 1 \enspace .
  \end{equation}
  Plugging 
  (\ref{eq:3}) for $k = m-1$ into (\ref{eq:1}) for $k = m$ yields
  \begin{equation}
    \label{eq:4}
    X_m^* = X_{m-1}^* + 1/2 \enspace , 
  \end{equation}
  and hence, by induction $X_m^* = (m+1)/2$.

  Now, we prove that Greedy is the best among all algorithms on these
  two types of instances. In fact, we make it easier for the
  algorithms by telling them beforehand how many keywords in the
  instance they will need to solve. Let $X_m$ and $Y_m$ be the
  expected number of keywords in the second-price matching produced by
  the \emph{best} algorithms that ``know'' that they are solving a
  normal instance of size $m$ and a restricted instance of size $m$,
  respectively. Let ${\cal A}_m$ and ${\cal B}_m$ denote these optimal
  algorithms.

  We prove that $X_m \leq X_m^*$ and $Y_m \leq Y_m^*$ for all $m$ by
  induction. The base case in which $m=1$ is easy, since no algorithm
  can obtain a profit of more than one on a normal instance of one
  keyword or more than zero on a restricted instance of one keyword.
  We now prove the induction step.

  First, consider ${\cal A}_m$. When the first keyword arrives, ${\cal
    A}_m$ has two choices: either ignore it or allocate it to one of the
  bidders. If ${\cal A}_m$ ignores the first keyword, its performance
  is at most the performance of ${\cal A}_{m-1}$ on the remaining
  keywords, which constitute a normal instance of $m-1$ keywords. On
  the other hand, if ${\cal A}_m$ allocates the first keyword to one of
  the bidders, then with probability $1/2$, it is faced with a normal
  instance of $m-1$ keywords, and with probability $1/2$ it is faced
  with a restricted instance of $m-1$ keywords.  The performance of
  ${\cal A}_m$ on these instance is at most the performance of ${\cal
    A}_{m-1}$ and ${\cal B}_{m-1}$, respectively. Thus, by the induction
  hypothesis, (\ref{eq:3}), and (\ref{eq:4}), we have
  \begin{eqnarray*}
    X_m & \leq & \max\{X_{m-1}, 1/2(X_{m-1} + Y_{m-1}) + 1\} \\
    & \leq & \max\{X_{m-1}^*, 1/2(X_{m-1}^* + Y_{m-1}^*) + 1\} \\
    & =    & \max\{X_{m-1}^*, 1/2(X_{m-1}^* + X_{m-1}^* - 1) + 1\} \\
    & =    & X_{m-1}^* + 1/2 \\
    & =    & X_m^* \enspace .
  \end{eqnarray*}
  
  Next, consider ${\cal B}_m$. When the first keyword arrives, ${\cal
    B}_m$ cannot allocate it for a profit.  If it allocates it for a
  profit of $0$, then it is faced with a restricted instance of $m-1$
  keywords.  If it does not allocate the keyword, then with
  probability 1/2, ${\cal B}_m$ is faced with a normal instance of
  $m-1$ keywords, and with probability $1/2$, it is faced with a
  restricted instance of $m-1$ keywords. Its performance on these
  instances is at most those of ${\cal A}_{m-1}$ and ${\cal B}_{m-1}$,
  respectively. Thus, by the induction hypothesis and (\ref{eq:2}), we
  have
  \begin{eqnarray*}
    Y_m & \leq & \max\{Y_{m-1},1/2(X_{m-1} + Y_{m-1})\} \\
    & \leq & \max \{Y_{m-1}^*,1/2(X_{m-1}^* + Y_{m-1}^*)\} \\
    & = & Y_m^* \enspace .
  \end{eqnarray*}
  This completes our proof.

\end{document}